\documentclass[a4paper,fleqn]{cas-sc}

\usepackage[authoryear,longnamesfirst]{natbib}
\usepackage{graphicx}%
\usepackage{multirow}%
\usepackage{amsmath,amssymb,amsfonts}%
\usepackage{amsthm}%
\usepackage{mathtools}%
\usepackage{mathrsfs}%
\usepackage[title]{appendix}%
\usepackage{xcolor}%
\usepackage{textcomp}%
\usepackage{manyfoot}%
\usepackage{booktabs}%
\usepackage{tabularx}%
\usepackage{array}%
\usepackage{algorithm}%
\usepackage{algorithmicx}%
\usepackage{algpseudocode}%
\usepackage{enumitem}%
\usepackage{url}
\usepackage{capt-of}

\newcommand{\prox}{\operatorname{prox}}
\newcommand{\dist}{\operatorname{dist}}
\newcommand{\sign}{\operatorname{sign}}
\newcommand{\ind}{\mathbf{1}}
\newcommand{\eps}{\varepsilon}

\theoremstyle{plain}
\newtheorem{theorem}{Theorem}
\newtheorem{proposition}{Proposition}
\newtheorem{lemma}{Lemma}
\theoremstyle{definition}
\newtheorem{assumption}{Assumption}
\theoremstyle{remark}
\newtheorem{remark}{Remark}

\def\tsc#1{\csdef{#1}{\textsc{\lowercase{#1}}\xspace}}
\tsc{WGM}
\tsc{QE}
\tsc{EP}
\tsc{PMS}
\tsc{BEC}
\tsc{DE}

\begin{document}
\let\WriteBookmarks\relax
\def\floatpagepagefraction{1}
\def\textpagefraction{.001}
\shorttitle{Distributed Stochastic Smoothing ADMM}

\title [mode = title]{Distributed Stochastic Smoothing ADMM for Penalized Quantile Regression}                      
\tnotemark[1]

\tnotetext[1]{This work was supported by the Research Start-up Fund
for Talents of the School of Mathematics and Statistics,
Yunnan University (Grant No. CZ305010126XX).}

\author[1]{Rongmei Liang}
\ead{liang_r_m@163.com}
\credit{conceptualization, supervision, methodology, writing--review and editing}

\affiliation[1]{organization={Department of Statistics and Data Science, Southern University of Science and Technology},
                addressline={1088 Xueyuan Avenue, Nanshan District}, 
                city={Shenzhen},
                postcode={518055}, 
                state={Guangdong},
                country={China}}

\author[2,3]{Xiaofei Wu}
\cormark[1]
\ead{xfwu1016@ynu.edu.cn}
\credit{methodology, software, formal analysis, visualization, writing--original draft, writing--review and editing}

\affiliation[2]{organization={Yunnan Key Laboratory of Statistical Modeling and Data Analysis, Yunnan University},
                addressline={East Outer Ring South Road, University Town}, 
                city={Kunming},
                postcode={650504}, 
		state={Yunnan},
                country={China}}

\affiliation[3]{organization={School of Mathematics and Statistics, Yunnan University},
                addressline={East Outer Ring South Road, University Town}, 
                city={Kunming},
                postcode={650504}, 
                state={Yunnan}, 
                country={China}}

\cortext[cor1]{Corresponding author}

\begin{abstract}
Quantile regression is well suited to heterogeneous and heavy-tailed data, but computation becomes challenging for large, distributed data sets because the check loss is nonsmooth. We propose a distributed stochastic smoothing alternating direction method of multipliers (DSS-ADMM) for horizontally partitioned penalized quantile regression. Each worker computes a mini-batch gradient of a Huber-smoothed check loss, and a coordinator performs a single proximal aggregation step for the regularizer. Raw observations remain local, worker updates run in parallel, and the method requires no matrix inversion. For proper, closed, and convex penalties, stacking the local coefficient vectors yields a standard two-block stochastic ADMM formulation. With fixed smoothing, we establish an expected $O(\log K/\sqrt K)$ joint objective-feasibility bound and an explicit $\eps/4$ approximation term for the original check-loss objective; when the smooth block is strongly convex, the bound improves to $O(\log K/K)$. We also characterize the scope of an extension to the minimax concave penalty and the smoothly clipped absolute deviation penalty. Reproducible simulations consider both homogeneous worker partitions, in which observations are independently and identically distributed across workers, and heterogeneous partitions, in which worker-specific covariate distributions differ. Sensitivity studies and analyses of the diabetes and Engel data illustrate the trade-offs among per-observation gradient evaluations, communication, consensus, sparsity, and prediction.
\end{abstract}



\begin{keywords}
Quantile regression \sep stochastic ADMM \sep distributed optimization \sep weakly convex regularization
\end{keywords}

\maketitle

\section{Introduction}\label{sec:introduction}
Quantile regression estimates conditional distributional features rather than only a conditional mean \citep{koenker1978,koenker2005}. It is therefore useful when errors are asymmetric, heteroscedastic, or heavy-tailed. Modern applications also require sparse or structured coefficients, motivating the least absolute shrinkage and selection operator (lasso) \citep{tibshirani1996}, elastic net \citep{zouhastie2005}, group lasso \citep{yuanlin2006}, and high-dimensional quantile regularization \citep{belloni2011,wuliu2009}. The resulting objective is typically nonsmooth in both its data-fitting and regularization terms.

Horizontal partitioning creates a second difficulty. Observations may be stored at hospitals, laboratories, sensors, or administrative sites and cannot be pooled. The alternating direction method of multipliers (ADMM) separates local fitting from global regularization through consensus constraints \citep{boyd2011}, but exact local minimization repeatedly scans all observations and may require numerical inner iterations. A stochastic local step can reduce this cost, provided that its random gradient estimator and its interaction with the consensus constraint are analyzed explicitly.

Online and stochastic ADMM methods provide the relevant optimization foundation. \citet{wangbanerjee2012} studied online ADMM; \citet{suzuki2013} combined online ADMM with dual averaging and proximal-gradient ideas; \citet{ouyang2013} established stochastic objective-feasibility bounds; \citet{azadi2014} refined stochastic ADMM complexity; and \citet{zhong2014} developed a faster variance-controlled variant. These analyses differ in update order and assumptions, but the stochastic block is smooth in their standard formulations. The raw check loss is not differentiable at zero, so those results cannot be invoked without modifying the loss or the stochastic oracle.

ADMM has become an important tool for large-scale penalized quantile regression. \citet{yulin2017} gave a distributable ADMM formulation for big data. \citet{yu2017} proposed the single-loop QPADM algorithm for nonconvex penalization. \citet{fan2021} used a slack-variable representation for distributed big data, and \citet{wang2024} combined random features with iterative ADMM for communication-efficient nonparametric quantile regression. Related parallel developments include feature splitting, consensus regularization, and Gaussian back substitution \citep{wu2025a,wu2025b,wu2025c}. These methods clarify how ADMM can exploit statistical structure, but they do not by themselves yield a mini-batch stochastic-gradient theorem for the smoothed consensus model studied here.

Smoothing of the quantile objective predates recent smoothing-ADMM algorithms. The piecewise quadratic-linear function used below is, up to an additive constant, the classical Huber loss \citep{huber1964}; finite and differentiable approximations of the quantile objective were studied earlier by \citet{chen2007} and \citet{zheng2011}. \citet{mirzaeifard2024} subsequently used this Huber-type surrogate in a single-loop smoothing ADMM for sparse quantile regression, and federated and decentralized extensions followed \citep{mirzaeifard2025,mirzaeifard2025dsad}. Their results motivate our surrogate, but the present randomness is different: every worker samples a local mini-batch, and the resulting martingale and conditional-variance terms enter the convergence proof.

A recent development is particularly relevant to the stochastic component of this work. \citet{long2026} proposed stochastic variance-reduced recursive momentum ADMM (SVRRM-ADMM), based on a loopless stochastic variance-reduced gradient estimator, together with an accelerated variant (ASVRRM-ADMM). For a global finite-sum linearly constrained problem with a smooth, possibly nonconvex first block and a convex, possibly nonsmooth second block, they obtained an $O(1/T)$ stationarity rate without imposing a uniform bounded-variance assumption. Under the Kurdyka--{\L}ojasiewicz (KL) property, they also established finite expected length and rates determined by the KL exponent. Because smoothing expresses each local loss in our model as a finite sum of differentiable terms, their recursive estimator provides a possible route to reducing stochastic-gradient error at the workers. Their theorem, however, does not directly strengthen the distributed objective-feasibility bound established here. Such an extension would require a distributed Lyapunov analysis that incorporates worker-specific estimator memories, probabilistic snapshot refreshes, the consensus operator, and the coordinator proximal step. Similarly, an accelerated distributed rate would require simultaneous control of extrapolation, gradient-tracking error, and consensus.

The paper makes four contributions.
\begin{enumerate}
\item We propose an inversion-free DSS-ADMM algorithm for horizontally partitioned penalized quantile regression. Workers access only local mini-batches, while the coordinator performs a single regularizer-dependent proximal update.
\item For convex proximable penalties, we stack all local models into one first block and thereby avoid an uncontrolled multi-block interpretation. We establish an expected $O(\log K/\sqrt K)$ joint objective-feasibility bound for fixed smoothing, an $\eps/4$ transfer bound for the original check loss, and an $O(\log K/K)$ result when the smooth block is strongly convex.
\item We characterize which results extend to the minimax concave penalty (MCP) and the smoothly clipped absolute deviation (SCAD) penalty. Their coordinator update is single-valued when the aggregate ADMM curvature exceeds the weak-convexity modulus. Convergence to a Karush--Kuhn--Tucker (KKT) point, however, is conditional on vanishing stochastic error and cannot in general be guaranteed for a fixed mini-batch with a nonvanishing noise level.
\item We report simulations under homogeneous and heterogeneous worker partitions, sensitivity analyses for the worker count and smoothing parameter, robustness and dimensional-scaling experiments, a weakly convex experiment, and analyses of the diabetes and Engel data. The comparisons distinguish prediction, sparsity, consensus, per-observation gradient evaluations, communication rounds, and serial emulation time.
\end{enumerate}

Three distinctions define the theoretical scope. First, stochasticity is at the observation level within each site, not merely at the worker-participation level. Conditional on the history, every local estimator is unbiased for its weighted smooth gradient. This permits the worker vectors to be stacked into one random first block and makes the dependence on local batch sizes explicit through a conditional-variance bound. Second, the main theorem uses fixed smoothing. The smoothing parameter controls both approximation and conditioning: the uniform approximation error is proportional to $\eps$, whereas the gradient Lipschitz constant grows as $1/\eps$. A single-loop schedule with $\eps_k\downarrow0$ changes both the objective and the admissible step size at every iteration and therefore needs a separate time-varying analysis. Third, convex and weakly convex penalties are separated. Convex subgradient inequalities and Jensen's inequality support the ergodic objective-feasibility analysis, whereas MCP and SCAD require stationarity arguments, boundedness, and vanishing stochastic error.

For clarity, the main notation is summarized here before the formal model is introduced. The number of workers is $M$, worker $m$ stores $n_m$ observations, the total sample size is $N=\sum_{m=1}^M n_m$, and the worker weight is $\alpha_m=n_m/N$. The common regression dimension is $p$, the quantile level is $\tau\in(0,1)$, and the smoothing parameter is $\eps>0$. At communication round $k\in\{0,\ldots,K-1\}$, $\mathcal B_m^k$ denotes the local mini-batch, $b_{m,k}=|\mathcal B_m^k|$ its size, and $g_m^k\in\mathbb{R}^p$ the local stochastic-gradient estimator. The local coefficient is $\theta_m^k\in\mathbb{R}^p$, the coordinator consensus vector is $z^k\in\mathbb{R}^p$, and $u_m^k\in\mathbb{R}^p$ is the scaled dual variable. The ADMM quadratic penalty is $\varrho>0$, the local proximal step is $\eta_{k+1}>0$, and $\prox$ denotes the proximal map. In the stacked formulation, $\beta=((\theta_1)^\mathsf{T},\ldots,(\theta_M)^\mathsf{T})^\mathsf{T}\in\mathbb{R}^{Mp}$, $A_1=I_{Mp}$, $A_2=-(\mathbf 1_M\otimes I_p)$, and $A_1\beta+A_2z=0$ represents all consensus equations. The functions $f_{1,\eps}$ and $P$ are respectively the smooth data-fitting block and the proximable regularization block. The symbol $\mathbb{E}$ denotes expectation, $\dist$ denotes Euclidean distance to a set, and a superscript $\star$ denotes an optimal solution or optimal value.

The distributed formulation allows unequal site sizes and non-identically distributed covariates, although the target remains one common coefficient vector. Full worker participation is assumed in the theorem. Partial participation, stale messages, compression, privacy noise, and vertical feature splitting are outside the proved scope. The method is federated only in the computational sense that raw records remain local; it is not by itself a formal privacy mechanism.

The remainder of the paper is organized as follows. Section~\ref{sec:model} introduces the model and smoothing approximation. Section~\ref{sec:algorithm} presents DSS-ADMM, Section~\ref{sec:convex-theory} develops the convex theory and discusses recursive-momentum variance reduction and acceleration, and Section~\ref{sec:weakly-convex} delineates the weakly convex extension. Section~\ref{sec:experiments} reports numerical results, and Section~\ref{sec:conclusion} concludes. Appendix~\ref{app:convex-proofs} contains the convex proofs, whereas Appendix~\ref{app:weakly-convex} gives the conditional weakly convex argument.

\section{Distributed penalized quantile regression}\label{sec:model}
\subsection{Model and consensus representation}
Let $M$ workers hold disjoint samples
\[
\mathcal D_m=\{(x_{mi},y_{mi}):i=1,\ldots,n_m\},
\qquad N=\sum_{m=1}^M n_m,
\]
where $x_{mi}\in\mathbb{R}^p$ and $\alpha_m=n_m/N$ is the proportion of observations stored at worker $m$. For a quantile level $\tau\in(0,1)$, the check loss applied to a residual $u$ is
\begin{align*}
\rho_\tau(u)&=u\{\tau-\ind(u<0)\}=\frac12\{|u|+(2\tau-1)u\}.
\end{align*}
Define the local empirical loss
\[
f_m(\theta)=\frac1{n_m}\sum_{i=1}^{n_m}
\rho_\tau(y_{mi}-x_{mi}^{\mathsf T}\theta).
\]
Weighting each local loss by $\alpha_m$ recovers the empirical loss over all $N$ observations without moving any record between workers. The resulting penalized quantile-regression problem is
\begin{equation}
\begin{aligned}
\min_{\theta\in\mathbb{R}^p}F(\theta)
=\sum_{m=1}^M\alpha_m f_m(\theta)+P(\theta),
\end{aligned}
\label{eq:target}
\end{equation}
where $P$ is first assumed proper, closed, convex, and proximable. Examples include the lasso, elastic net, group lasso, fused lasso, and indicator functions of convex constraints \citep{parikh2014}.

Problem~\eqref{eq:target} couples all sites through the common coefficient $\theta$. To expose the distributed structure, introduce a local copy $\theta_m$ at each worker and a coordinator variable $z$. Enforcing $\theta_m=z$ gives the equivalent consensus formulation
\begin{align}
\min_{\theta_1,\ldots,\theta_M,z}\quad &
\sum_{m=1}^M\alpha_m f_m(\theta_m)+P(z),\notag\\
\text{s.t.}\quad &\theta_m=z,\qquad m=1,\ldots,M.
\label{eq:consensus}
\end{align}
The loss terms in \eqref{eq:consensus} can now be evaluated in parallel from local data, whereas the regularizer is handled once through $z$. The remaining obstacle is that $\rho_\tau$ is nondifferentiable at zero, so a direct stochastic-gradient update is not available. We therefore introduce a smooth approximation before deriving the distributed updates.

\subsection{Smoothing the check loss}
For $\eps>0$, replace the absolute-value term in the check loss by the quadratic-linear function
\begin{equation}
\begin{aligned}
 h_\eps(u)&=
 \begin{cases}
 |u|,&|u|\ge\eps,\\[1mm]
 u^2/(2\eps)+\eps/2,&|u|<\eps,
 \end{cases}\\
\rho_{\tau,\eps}(u)&=\frac12\{h_\eps(u)+(2\tau-1)u\}.
\end{aligned}
\label{eq:smooth}
\end{equation}
The surrogate agrees with $|u|$ outside the interval $(-\eps,\eps)$ and replaces its kink at zero by a quadratic segment. Up to an additive constant, $h_\eps$ is the Huber loss \citep{huber1964}. Smooth quantile approximations appeared before the SIAD family \citep{chen2007,zheng2011}; the later SIAD papers used this surrogate in centralized, federated, and decentralized ADMM \citep{mirzaeifard2024,mirzaeifard2025,mirzaeifard2025dsad}.

The corresponding derivative, which supplies the scalar factor in each sample gradient, is
\begin{equation}
\begin{aligned}
\psi_{\tau,\eps}(u)=\rho'_{\tau,\eps}(u)=\frac12
\begin{cases}
\sign(u)+2\tau-1,&|u|\ge\eps,\\
u/\eps+2\tau-1,&|u|<\eps.
\end{cases}
\end{aligned}
\label{eq:psi}
\end{equation}
\begin{lemma}[Smoothing properties]\label{lem:smoothing-properties}
For every $u\in\mathbb{R}$,
$0\le\rho_{\tau,\eps}(u)-\rho_\tau(u)\le\eps/4$,
$|\psi_{\tau,\eps}(u)|\le\max(\tau,1-\tau)$, and
$\psi_{\tau,\eps}$ is $1/(2\eps)$-Lipschitz.
\end{lemma}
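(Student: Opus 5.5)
The three claims reduce to one-dimensional facts about $h_\eps$ and its derivative, so I will work directly from definitions \eqref{eq:smooth} and \eqref{eq:psi}. Since $\rho_\tau(u)=\frac12\{|u|+(2\tau-1)u\}$ and $\rho_{\tau,\eps}(u)=\frac12\{h_\eps(u)+(2\tau-1)u\}$, the linear parts cancel in the difference:
\[
\rho_{\tau,\eps}(u)-\rho_\tau(u)=\tfrac12\{h_\eps(u)-|u|\}.
\]
The approximation bound therefore amounts to $0\le h_\eps(u)-|u|\le\eps/2$. For $|u|\ge\eps$ the difference is zero. For $|u|<\eps$, put $t=|u|\in[0,\eps)$; then
\[
h_\eps(u)-|u|=\frac{t^2}{2\eps}+\frac{\eps}{2}-t=\frac{(t-\eps)^2}{2\eps}.
\]
This is nonnegative and decreasing on $[0,\eps)$, so its maximum is $\eps/2$, attained at $t=0$. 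Halving gives the interval $[0,\eps/4]$.

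\textbf{Gradient bound.} I will first note that $h_\eps$ is $C^1$. At $|u|=\eps$, the value of the quadratic piece is $\eps/2+\eps/2=\eps$, and its derivative is $u/\eps=\pm1=\sign(u)$. This justifies \eqref{eq:psi} as the derivative everywhere. Write $h_\eps'(u)=\max\{-1,\min\{1,u/\eps\}\}$, the clip of $u/\eps$ to $[-1,1]$. Then $\psi_{\tau,\eps}(u)=\frac12\{h'_\eps(u)+2\tau-1\}$ takes values in
\[
\bigl[\tfrac12(-1+2\tau-1),\ \tfrac12(1+2\tau-1)\bigr]=[\tau-1,\ \tau].
\]
Hence $|\psi_{\tau,\eps}(u)|\le\max(\tau,1-\tau)$.

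\textbf{Lipschitz constant.} The clip map $v\mapsto\max\{-1,\min\{1,v\}\}$ is the projection onto $[-1,1]$, so it is $1$-Lipschitz. Composing it with $u\mapsto u/\eps$ makes $h_\eps'$ a $1/\eps$-Lipschitz function. The constant $2\tau-1$ does not affect Lipschitz constants, and the factor $\frac12$ halves the bound, giving $1/(2\eps)$ for $\psi_{\tau,\eps}$.

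None of these steps is a real obstacle. The only point needing care is the $C^1$ gluing at $|u|=\eps$, which legitimizes both the piecewise formula \eqref{eq:psi} and the clip representation used in the Lipschitz step. If one wants to avoid the projection argument, the Lipschitz bound can be checked case-wise instead. The slope of $h_\eps'$ is $1/\eps$ on $(-\eps,\eps)$ and $0$ outside, and $h_\eps'$ is continuous and piecewise linear, so its Lipschitz constant is the maximal slope, $1/\eps$.
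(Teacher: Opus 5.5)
Your proof is correct and follows the paper's own argument: the linear parts cancel, $h_\eps(u)-|u|=(\eps-|u|)^2/(2\eps)$ on $(-\eps,\eps)$ has maximum $\eps/2$, and the derivative of $h_\eps$ ($u/\eps$ inside the interval, $\sign(u)$ outside) is $1/\eps$-Lipschitz, so the factor $\tfrac12$ gives $\eps/4$ and $1/(2\eps)$. Your $C^1$ check at $|u|=\eps$ and the range computation $\psi_{\tau,\eps}(u)\in[\tau-1,\tau]$ spell out steps the paper states without detail; the paper gives the range only later, in its shared-model argument.
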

Lemma~\ref{lem:smoothing-properties} records the two features needed below. The first inequality controls the bias introduced by smoothing, while boundedness and Lipschitz continuity of $\psi_{\tau,\eps}$ control the stochastic-gradient magnitude and the smoothness constant. Accordingly, define the smoothed local empirical loss
\[
f_{m,\eps}(\theta)=\frac1{n_m}\sum_{i=1}^{n_m}
\rho_{\tau,\eps}(y_{mi}-x_{mi}^{\mathsf{T}}\theta).
\]
Applying the chain rule to each residual gives
\begin{equation}
\nabla f_{m,\eps}(\theta)=-\frac1{n_m}\sum_{i=1}^{n_m}
x_{mi}\psi_{\tau,\eps}(y_{mi}-x_{mi}^{\mathsf{T}}\theta),
\label{eq:fullgrad}
\end{equation}
Let $X_m$ denote the $n_m\times p$ design matrix whose $i$th row is $x_{mi}^{\mathsf T}$. Because $\psi_{\tau,\eps}$ is $1/(2\eps)$-Lipschitz, the gradient in \eqref{eq:fullgrad} is Lipschitz with constant bounded by
\begin{equation}
L_{m,\eps}\le \frac{\|X_m\|_2^2}{2\eps n_m}.
\label{eq:Lm}
\end{equation}
Thus smoothing converts the nonsmooth local loss into a differentiable finite sum with an explicit smoothness bound. This structure permits the mini-batch estimator introduced in Section~\ref{sec:algorithm}, while the approximation inequality in Lemma~\ref{lem:smoothing-properties} later transfers the convergence result back to the original check-loss objective.

\section{The DSS-ADMM algorithm}\label{sec:algorithm}
\subsection{Local stochastic gradients}
At round $k$, worker $m$ samples a mini-batch $\mathcal B_m^k$ of size $b_{m,k}$ uniformly without replacement and computes
\begin{equation}
 g_m^k=-\frac{\alpha_m}{b_{m,k}}\sum_{i\in\mathcal B_m^k}
x_{mi}\psi_{\tau,\eps}(y_{mi}-x_{mi}^{\mathsf{T}}\theta_m^k).
\label{eq:stochgrad}
\end{equation}
Let $\mathcal F_k$ be the sigma-field generated by the history before sampling at round $k$. Then
\[
\mathbb{E}[g_m^k\mid\mathcal F_k]=\alpha_m\nabla f_{m,\eps}(\theta_m^k).
\]
\begin{assumption}[Stochastic oracle]\label{ass:oracle}
There are finite $\nu_m$ such that
\[
\mathbb{E}\!\left[\|g_m^k-\alpha_m\nabla f_{m,\eps}(\theta_m^k)\|^2\mid\mathcal F_k\right]
\le \frac{\alpha_m^2\nu_m^2}{b_{m,k}}.
\]
Sampling is conditionally independent across workers.
\end{assumption}
Consequently, the stacked variance is bounded by
\begin{equation}
\sigma_k^2:=\sum_{m=1}^M\frac{\alpha_m^2\nu_m^2}{b_{m,k}}.
\label{eq:variance}
\end{equation}

\subsection{Updates}
Let $u_m$ be a scaled dual variable, $\varrho>0$ the ADMM penalty, and $\eta_{k+1}>0$ a stochastic proximal step. Linearizing only the smooth data-fitting term gives
\begin{align}
\theta_m^{k+1}=\arg\min_\theta\Bigl\{
\langle g_m^k,\theta-\theta_m^k\rangle
+\frac\varrho2\|\theta-z^k+u_m^k\|^2\notag+\frac1{2\eta_{k+1}}\|\theta-\theta_m^k\|^2\Bigr\},
\label{eq:localopt}
\end{align}
with explicit solution
\begin{equation}
\theta_m^{k+1}=\frac{\eta_{k+1}^{-1}\theta_m^k+
\varrho(z^k-u_m^k)-g_m^k}{\eta_{k+1}^{-1}+\varrho}.
\label{eq:localclosed}
\end{equation}
The coordinator computes
\begin{align}
\bar v^{k+1}=\frac1M\sum_{m=1}^M(\theta_m^{k+1}+u_m^k),\quad z^{k+1}=\prox_{P/(M\varrho)}(\bar v^{k+1}).
\label{eq:coordinator}
\end{align}
followed by
\begin{equation}
u_m^{k+1}=u_m^k+\theta_m^{k+1}-z^{k+1}.
\label{eq:dual}
\end{equation}
For $P(z)=\lambda\|z\|_1$, the coordinator uses componentwise soft thresholding at $\lambda/(M\varrho)$. Elastic-net and group-lasso updates are similarly explicit.

\begin{algorithm}[t]
\caption{Distributed stochastic smoothing ADMM (DSS-ADMM)}
\label{alg:dss}
\begin{algorithmic}[1]
\Require Local data, $\tau$, $\eps$, $P$, $\varrho$, $\{\eta_k\}$, $\{b_{m,k}\}$
\State Initialize $\theta_m^0=z^0=u_m^0=0$
\For{$k=0,1,\ldots,K-1$}
  \ForAll{workers $m$ in parallel}
    \State sample $\mathcal B_m^k$ and compute $g_m^k$ by \eqref{eq:stochgrad}
    \State compute $\theta_m^{k+1}$ by \eqref{eq:localclosed} and upload $\theta_m^{k+1}+u_m^k$
  \EndFor
  \State coordinator computes $z^{k+1}$ by \eqref{eq:coordinator} and broadcasts it
  \ForAll{workers $m$ in parallel}
    \State update $u_m^{k+1}$ by \eqref{eq:dual}
  \EndFor
\EndFor
\end{algorithmic}
\end{algorithm}

\begin{center}
\begin{minipage}{0.96\textwidth}
\centering
\includegraphics[width=.82\textwidth]{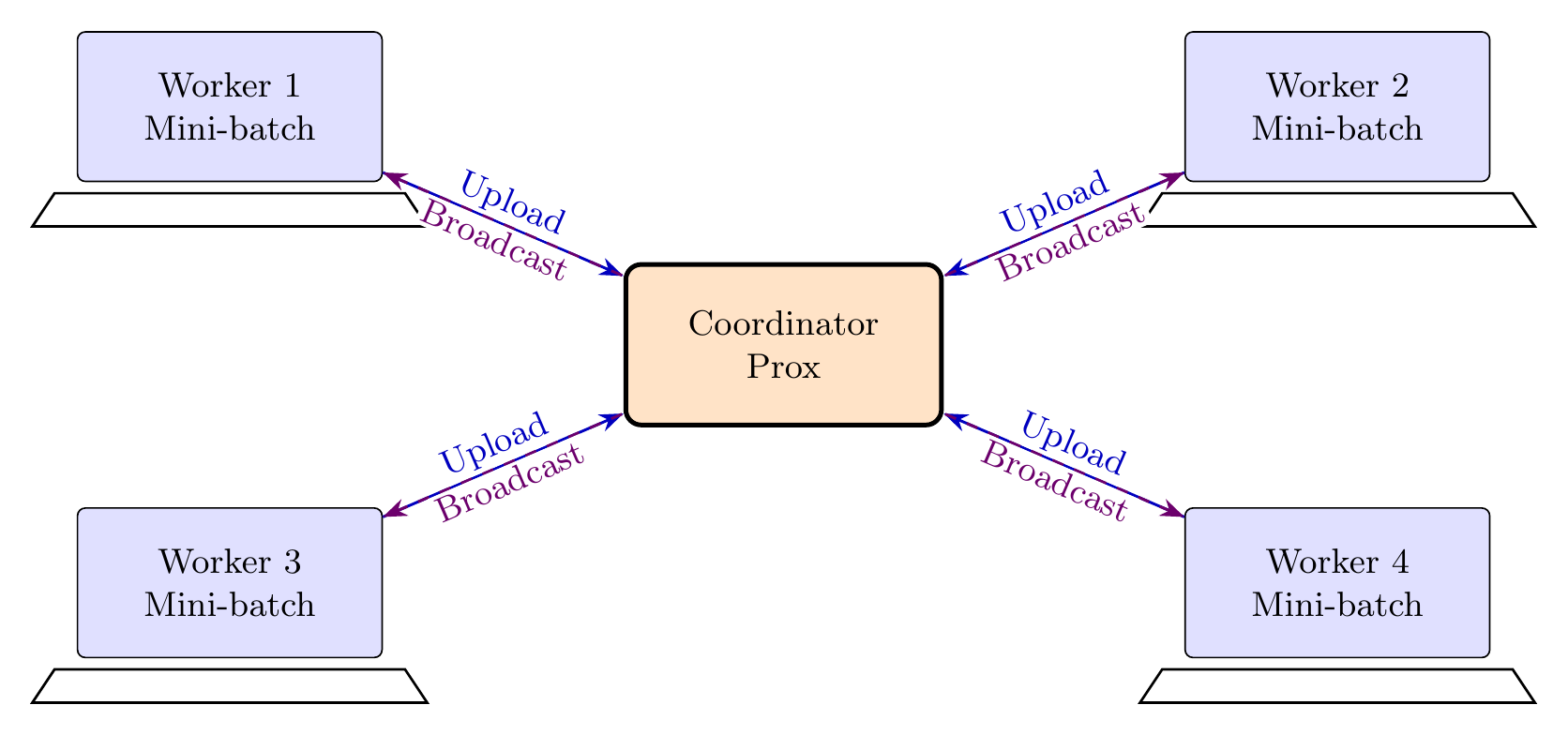}
\captionof{figure}{One communication round of DSS-ADMM.}
\label{fig:architecture}
\end{minipage}
\end{center}

Figure~\ref{fig:architecture} summarizes the complete distributed update performed at communication round $k$. Each worker retains its raw observations locally, samples the mini-batch $\mathcal B_m^k$, evaluates the derivative of the smoothed quantile loss to construct the stochastic gradient $g_m^k$, and computes the explicit local primal vector $\theta_m^{k+1}$ from \eqref{eq:localclosed}. The solid upload arrows represent transmission of the shifted vector $\theta_m^{k+1}+u_m^k$ to the coordinator; transmitting this sum avoids a separate dual-variable message. After receiving the uploads, the coordinator forms $\bar v^{k+1}=M^{-1}\sum_{m=1}^M(\theta_m^{k+1}+u_m^k)$ and applies the regularizer-dependent proximal map in \eqref{eq:coordinator} to obtain the new consensus vector $z^{k+1}$. The dashed broadcast arrows represent transmission of $z^{k+1}$ back to every worker. Each worker then completes the scaled-dual update $u_m^{k+1}=u_m^k+\theta_m^{k+1}-z^{k+1}$ and proceeds to the next round. Thus, one DSS-ADMM cycle consists of local mini-batch sampling, local stochastic-gradient evaluation, local primal updating, coordinator-side proximal aggregation, consensus broadcasting, and worker-side dual updating.

\subsection{Computational and communication cost}
The principal local cost is the matrix-vector work required to evaluate the mini-batch gradient in \eqref{eq:stochgrad}. To state the resource requirements precisely, let $C_P(p)$ denote the arithmetic cost of evaluating the coordinator proximal map $\prox_{P/(M\varrho)}$ on a $p$-vector. For the lasso and elastic net, $C_P(p)=O(p)$; group-separable penalties also have linear cost when the groups form a fixed partition.

\begin{proposition}[Per-round and cumulative resource cost]\label{prop:resource-cost}
Assume dense $p$-dimensional covariates, full participation by all $M$ workers, and point-to-point accounting in which a broadcast to $M$ workers counts as $M$ transmitted messages. At round $k$:
\begin{enumerate}[label=\textup{(\roman*)}]
\item worker $m$ uses $O(b_{m,k}p+p)$ arithmetic operations and $O(p)$ working memory beyond its local data and the sampled mini-batch;
\item the coordinator uses $O(Mp+C_P(p))$ arithmetic operations and $O(p)$ working memory if the uploaded vectors are accumulated as they arrive;
\item the network transmits $2Mp$ real scalars: each worker uploads one $p$-vector and receives one $p$-vector.
\end{enumerate}
Consequently, after $K$ rounds the total arithmetic work across all devices is
\begin{equation}
O\!\left(
p\sum_{k=0}^{K-1}\sum_{m=1}^M b_{m,k}
+K\{Mp+C_P(p)\}
\right),
\label{eq:cumulative-computation}
\end{equation}
and the cumulative communication volume is $2KMp$ real scalars under this communication model. Ignoring synchronization and network latency, the parallel arithmetic time per round is
\begin{equation}
O\!\left(
p\max_{1\le m\le M}b_{m,k}+Mp+C_P(p)
\right).
\label{eq:parallel-cost}
\end{equation}
\end{proposition}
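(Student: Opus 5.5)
The proof is a direct operation count over the three stages of one round of Algorithm~\ref{alg:dss}. Summing over workers and rounds then gives \eqref{eq:cumulative-computation}, and taking a maximum over the parallel worker stage gives \eqref{eq:parallel-cost}.

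For part (i), I count worker $m$'s work in the order the updates occur. To form $g_m^k$ in \eqref{eq:stochgrad}, each sampled index $i\in\mathcal B_m^k$ needs the inner product $x_{mi}^{\mathsf T}\theta_m^k$, which takes $O(p)$ operations. It then needs the scalar $\psi_{\tau,\eps}$ from \eqref{eq:psi}, which takes $O(1)$ operations since it involves one comparison with $\eps$. Finally it needs a scaled accumulation of $x_{mi}$ into a running $p$-vector, another $O(p)$ operations. Together these cost $O(b_{m,k}p)$, and the final scaling by $\alpha_m/b_{m,k}$ adds $O(p)$. The closed form \eqref{eq:localclosed} is a fixed linear combination of four $p$-vectors with scalar coefficients, so it costs $O(p)$. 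Forming the upload $\theta_m^{k+1}+u_m^k$ costs $O(p)$, and so does the dual update \eqref{eq:dual}. For memory, the gradient is accumulated in place, so the worker only stores $\theta_m^k$, $u_m^k$, $z^k$, $g_m^k$ and the upload vector. That is a constant number of $p$-vectors, hence $O(p)$ beyond the data and the sampled batch. Sampling $b_{m,k}$ indices without replacement is $O(b_{m,k})$, for instance by a partial Fisher--Yates shuffle, and is absorbed into the bound.

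For part (ii), the coordinator adds $M$ incoming $p$-vectors into a single accumulator as they arrive. This costs $O(Mp)$ operations and $O(p)$ memory. Dividing by $M$ to get $\bar v^{k+1}$ costs $O(p)$, and the proximal step \eqref{eq:coordinator} costs $C_P(p)$ by definition. For part (iii), each worker sends exactly one $p$-vector, namely $\theta_m^{k+1}+u_m^k$, so the dual variable never needs its own message. The coordinator broadcasts $z^{k+1}$, which counts as $M$ messages of $p$ scalars under the stated point-to-point model. The total per round is $Mp+Mp=2Mp$ scalars.

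The cumulative statements then follow directly. Summing (i) over $m$ gives $O(p\sum_m b_{m,k}+Mp)$; adding (ii) and summing over $k$ gives \eqref{eq:cumulative-computation}. The communication total is $K\cdot 2Mp$. For the parallel time, the worker stages run concurrently, so they contribute the maximum $O(p\max_m b_{m,k}+p)$. The coordinator's stage comes after them sequentially and contributes $O(Mp+C_P(p))$; the $O(p)$ term is absorbed. No step here is a genuine obstacle. The only care needed is to state the accounting conventions explicitly: that sampling cost is dominated by the gradient cost, that accumulation is done on arrival so the coordinator's memory stays $O(p)$ rather than $O(Mp)$, and that latency is ignored in \eqref{eq:parallel-cost}.
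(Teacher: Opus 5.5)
Your proposal is correct and is essentially the paper's argument: $O(p)$ work per sampled observation for the residual, $\psi_{\tau,\eps}$, and the scaled accumulation; $O(p)$ for the explicit primal, upload, and dual updates; a running-sum aggregation plus one proximal map at the coordinator; $p+p$ scalars per worker per round; then summation over workers and rounds, with the slowest worker followed by the sequential coordinator stage giving \eqref{eq:parallel-cost}. Your extra bookkeeping on sampling and on the stored $p$-vectors only makes explicit what the paper leaves implicit, with one caveat: for the sampling remark, either keep the partial Fisher--Yates index array persistent across rounds or shuffle data rows in place, so that neither an $O(n_m)$ reinitialization nor $O(n_m)$ extra memory enters the per-round bound.
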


\begin{proof}
For each sampled observation, evaluating the residual, the scalar derivative $\psi_{\tau,\eps}$, and its product with $x_{mi}$ requires $O(p)$ arithmetic. Summing over $b_{m,k}$ observations gives $O(b_{m,k}p)$ operations, while the explicit primal and dual updates require only $O(p)$ additional work. The worker stores $\theta_m^k$, $u_m^k$, and the received consensus vector $z^k$, which accounts for $O(p)$ working memory beyond its data and mini-batch. The coordinator adds $M$ uploaded $p$-vectors and applies one proximal map, giving $O(Mp+C_P(p))$ work. A running sum requires only one $p$-vector of working memory. Finally, the upload of $\theta_m^{k+1}+u_m^k$ and the broadcast of $z^{k+1}$ contribute $p+p$ transmitted scalars per worker. Summing these quantities over workers and rounds proves \eqref{eq:cumulative-computation}; taking the slowest parallel worker and then adding the sequential coordinator stage gives \eqref{eq:parallel-cost}.
\end{proof}

Proposition~\ref{prop:resource-cost} separates computation from communication. Replacing a full local gradient by a mini-batch reduces worker $m$'s leading per-round computation from $O(n_mp)$ to $O(b_{m,k}p)$, but it does not reduce the $2Mp$ scalar communication volume of a synchronous round under point-to-point accounting. Thus smaller batches save local computation at a fixed number of rounds, whereas reducing communication requires fewer rounds, partial participation, compression, multicast-aware accounting, or a different communication protocol. The convergence theory below assumes full participation and uncompressed messages, so these extensions are outside its scope.

\section{Convex convergence theory}\label{sec:convex-theory}
Stack $\beta=(\theta_1^{\mathsf{T}},\ldots,\theta_M^{\mathsf{T}})^{\mathsf{T}}$ and write $A_1=I_{Mp}, A_2=-(\mathbf1_M\otimes I_p), b=0$.
The smoothed consensus problem is a two-block linearly constrained problem with
\[
f_{1,\eps}(\beta)=\sum_{m=1}^M\alpha_m f_{m,\eps}(\theta_m),
\qquad f_2(z)=P(z),
\]
and
\begin{equation}
L_\eps=\max_m\alpha_mL_{m,\eps}
\le\max_m\frac{\alpha_m\|X_m\|_2^2}{2\eps n_m}.
\label{eq:Lstack}
\end{equation}
Thus Algorithm~\ref{alg:dss} is stochastic ADMM applied to a stacked two-block problem.

\begin{assumption}[Convex setting]\label{ass:convex}
$P$ is proper, closed, and convex; the smoothed consensus problem has a primal-dual solution; the iterates are initialized at finite values; and Assumption~\ref{ass:oracle} holds with fixed batch sizes, so $\sigma_k\le\sigma_b<\infty$.
\end{assumption}
Let $(\beta^\star,z^\star)$ be a primal solution,
$D_\beta=\|\beta^0-\beta^\star\|$, and
$D_z=\|A_2z^0-A_2z^\star\|$. For $D_\beta>0$, choose
\begin{equation}
\eta_{k+1}=\frac1{2L_\eps+\sigma_b\sqrt{k+1}/D_\beta}.
\label{eq:stepsize}
\end{equation}
Define weighted averages
\begin{align*}
\bar\beta^K=\frac{\sum_{k=0}^{K-1}\eta_{k+1}\beta^{k+1}}
{\sum_{k=0}^{K-1}\eta_{k+1}}, \quad \bar z^K=\frac{\sum_{k=0}^{K-1}\eta_{k+1}z^{k+1}}
{\sum_{k=0}^{K-1}\eta_{k+1}}.
\end{align*}

\begin{theorem}[Expected objective-feasibility convergence]\label{thm:convex-rate}
Under Assumption~\ref{ass:convex} and \eqref{eq:stepsize}, for every $q>0$ and all sufficiently large $K$, there are finite constants $C_0,C_1$ depending only on $D_\beta,D_z,L_\eps,\sigma_b,\varrho,$ and $q$ such that
\begin{align}
\mathbb{E}\bigl[&f_{1,\eps}(\bar\beta^K)+P(\bar z^K)-F_\eps^\star +q\|A_1\bar\beta^K+A_2\bar z^K\|\bigr]
\le \frac{C_0\log K+C_1}{\sqrt K}.
\label{eq:mainrate}
\end{align}
If $\lambda^\star$ is an optimal multiplier and $q>\|\lambda^\star\|$, then
\[
\mathbb{E}\|A_1\bar\beta^K+A_2\bar z^K\|=O(\log K/\sqrt K).
\]
\end{theorem}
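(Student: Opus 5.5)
The plan is to follow the Ouyang et al.\ style analysis of linearized stochastic ADMM, applied to the stacked two-block problem with $f_{1,\eps}$, $f_2=P$, $A_1=I_{Mp}$ and $A_2=-(\mathbf 1_M\otimes I_p)$. Write the unscaled multiplier as $\lambda^k=\varrho u^k$, where $u^k$ stacks the $u_m^k$. Write the stacked stochastic gradient as $G^k$ and the error as $\delta^k=G^k-\nabla f_{1,\eps}(\beta^k)$. By Assumption~\ref{ass:oracle}, $\mathbb{E}[\delta^k\mid\mathcal F_k]=0$ and $\mathbb{E}[\|\delta^k\|^2\mid\mathcal F_k]\le\sigma_b^2$.

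First I will derive a one-step inequality. Fix any $(\beta,z)$ with $A_1\beta+A_2z=0$ and any multiplier $\lambda$, and set $w=(\beta,z,\lambda)$. I will bound the gap function
\[
\Phi_k=f_{1,\eps}(\beta^{k+1})+P(z^{k+1})-f_{1,\eps}(\beta)-P(z)+\langle\lambda,A_1\beta^{k+1}+A_2z^{k+1}\rangle .
\]
Four ingredients enter this bound.
\begin{itemize}
\item The optimality condition of the closed-form update \eqref{eq:localclosed}.
\item The convexity inequality $f_{1,\eps}(\beta^k)+\langle\nabla f_{1,\eps}(\beta^k),\beta-\beta^k\rangle\le f_{1,\eps}(\beta)$.
\item The descent lemma with $L_\eps$ from \eqref{eq:Lstack}. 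This uses the block-diagonal Hessian, so the stacked Lipschitz constant is $\max_m\alpha_mL_{m,\eps}$.
\item The subgradient optimality condition of the proximal $z$-step \eqref{eq:coordinator} and the dual update \eqref{eq:dual}.
\end{itemize}
Together with the standard three-point identities, these should give
\[
\eta_{k+1}\Phi_k\le\tfrac12\bigl(\|\beta^k-\beta\|^2-\|\beta^{k+1}-\beta\|^2\bigr)+\eta_{k+1}\tfrac\varrho2\bigl(\|A_2z^k-A_2z\|^2-\|A_2z^{k+1}-A_2z\|^2\bigr)+\tfrac{\eta_{k+1}}{2\varrho}\bigl(\|\lambda^k-\lambda\|^2-\|\lambda^{k+1}-\lambda\|^2\bigr)+\eta_{k+1}\langle\delta^k,\beta-\beta^k\rangle+\frac{\eta_{k+1}^2\|\delta^k\|^2}{2(1-\eta_{k+1}L_\eps)}.
\]
The condition $\eta_{k+1}\le 1/(2L_\eps)$ from \eqref{eq:stepsize} absorbs the $L_\eps\|\beta^{k+1}-\beta^k\|^2$ term, and Young's inequality bounds the cross term $\langle\delta^k,\beta^{k+1}-\beta^k\rangle$. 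Note that the $z$ and $\lambda$ telescoping terms carry the weight $\eta_{k+1}$. If I instead divide by $\eta_{k+1}$ first, the $\beta$ term acquires the weights $1/\eta_{k+1}$.

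Second, I will sum over $k=0,\dots,K-1$. The $\beta$ distances telescope directly. The weighted $z$ and $\lambda$ terms are handled by summation by parts, using that $\eta_k$ is nonincreasing. This leaves $\eta_1(\varrho D_z^2/2+\|\lambda^0-\lambda\|^2/(2\varrho))$ plus nonnegative leftovers that I will drop. The leftovers would otherwise require bounded iterates, and I expect avoiding that to be the main obstacle. My first attempt is the Ouyang trick of keeping $\eta$ constant on the $z,\lambda$ terms, or weighting by $\eta_{k+1}$ as in the definition of $\bar\beta^K$. If the decreasing weights cause trouble, I will fall back on bounding the $\lambda$-distance through the optimal $\lambda^\star$.

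Dividing by $S_K=\sum\eta_{k+1}$ and using convexity (Jensen) of $f_{1,\eps}+P$ and linearity of the constraint gives
\[
\Phi(\bar\beta^K,\bar z^K)\le S_K^{-1}\Bigl[\tfrac12 D_\beta^2+c(D_z,\varrho,\lambda)+\sum_k\eta_{k+1}\langle\delta^k,\beta-\beta^k\rangle+\sum_k\eta_{k+1}^2\|\delta^k\|^2\Bigr].
\]

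Third, I will choose $(\beta,z)=(\beta^\star,z^\star)$ and
\[
\lambda=q\,\frac{A_1\bar\beta^K+A_2\bar z^K}{\|A_1\bar\beta^K+A_2\bar z^K\|},
\]
so that $\|\lambda\|\le q$ and the left side becomes the objective gap plus $q\|A_1\bar\beta^K+A_2\bar z^K\|$. This random $\lambda$ is harmless because it appears only in $\|\lambda^0-\lambda\|^2\le2\|\lambda^0\|^2+2q^2$. The martingale term has zero expectation because $\beta^\star$ is deterministic and $\beta^k$ is $\mathcal F_k$-measurable.

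Taking expectations, I will use $\eta_{k+1}\ge D_\beta/(2L_\eps D_\beta+\sigma_b\sqrt{k+1})$. This gives $S_K\gtrsim(D_\beta/\sigma_b)\sqrt K$ for large $K$. It also gives $\sum_k\eta_{k+1}^2\sigma_b^2\le D_\beta^2\sum_k 1/(k+1)\le D_\beta^2(1+\log K)$. Collecting terms yields $(C_0\log K+C_1)/\sqrt K$ with $C_0\propto\sigma_bD_\beta$, and $C_1$ depending on $D_\beta,D_z,L_\eps,\sigma_b,\varrho,q$ (through $\|\lambda^0\|$ and $\eta_1$), which is \eqref{eq:mainrate}.

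For the feasibility claim, I will use the saddle-point inequality
\[
f_{1,\eps}(\bar\beta^K)+P(\bar z^K)-F_\eps^\star\ge-\|\lambda^\star\|\,\|A_1\bar\beta^K+A_2\bar z^K\|.
\]
Substituting it into \eqref{eq:mainrate} gives $(q-\|\lambda^\star\|)\,\mathbb{E}\|A_1\bar\beta^K+A_2\bar z^K\|\le(C_0\log K+C_1)/\sqrt K$, which is the stated $O(\log K/\sqrt K)$ bound when $q>\|\lambda^\star\|$.
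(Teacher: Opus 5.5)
Your proposal is correct and follows essentially the same route as the paper. Both use the same one-step inequality with a test multiplier. Both weight it by $\eta_{k+1}$, so the $\beta$-distances telescope exactly and the $z$- and $\lambda$-differences are handled by summation by parts with nonincreasing $\eta$. The leftovers there are nonpositive, so no bounded-iterate assumption or fallback via $\lambda^\star$ is needed. Both then choose $\widetilde\lambda=q\bar r^K/\|\bar r^K\|$ pathwise, which is legitimate because the martingale term does not involve it, and both use the saddle-point inequality for feasibility. Two small points: set the test multiplier to $0$ when $\bar r^K=0$, and note that $\lambda^0=\varrho u^0=0$, so the $\|\lambda^0\|$ dependence drops out, matching the constants in the statement.
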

The proof is given in Appendix~\ref{app:convex-proofs}. Conditional unbiasedness eliminates the martingale cross term, while the bounded-variance assumption controls the remaining stochastic error. The resulting rate is consistent with stochastic ADMM analyses based on noisy gradient estimates \citep{ouyang2013}.

Each smoothed local loss is Lipschitz with
\begin{align*}
G_m=\max(\tau,1-\tau)\frac1{n_m}
\sum_{i=1}^{n_m}\|x_{mi}\|, \quad G_c=\Bigl(\sum_m\alpha_m^2G_m^2\Bigr)^{1/2}.
\end{align*}
Hence $f_{1,\eps}(\mathbf1_M\otimes z)\le f_{1,\eps}(\beta)+G_c\|\beta-\mathbf1_M\otimes z\|$.

\begin{proposition}[Shared model and original check loss]\label{prop:shared-model}
Taking $q\ge G_c$ in Theorem~\ref{thm:convex-rate} gives
\[
\mathbb{E}[F_\eps(\bar z^K)-F_\eps^\star]
\le \frac{C_0\log K+C_1}{\sqrt K},
\]
and for the original nonsmooth objective \eqref{eq:target},
\begin{equation}
\mathbb{E}[F(\bar z^K)-F^\star]
\le \frac{C_0\log K+C_1}{\sqrt K}+\frac\eps4.
\label{eq:transfer}
\end{equation}
\end{proposition}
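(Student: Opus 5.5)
Starting from Theorem~\ref{thm:convex-rate} with $q\ge G_c$, I will reduce the stacked objective-feasibility gap to the single-model gap for $F_\eps(z)=\sum_m\alpha_m f_{m,\eps}(z)+P(z)$, and then use Lemma~\ref{lem:smoothing-properties} to pass to the check loss. First, the optimal values agree: $F_\eps^\star$ in \eqref{eq:mainrate} is the optimal value of the smoothed consensus problem. Any feasible point has $\theta_m=z$ for all $m$, and then its objective equals $F_\eps(z)$. Hence $F_\eps^\star=\min_z F_\eps(z)$.

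Second, I bound $F_\eps(\bar z^K)$ pointwise by the quantity inside the expectation. Since $A_1\bar\beta^K+A_2\bar z^K=\bar\beta^K-\mathbf1_M\otimes\bar z^K$, the Lipschitz inequality stated before the proposition gives
\[
f_{1,\eps}(\mathbf1_M\otimes\bar z^K)\le f_{1,\eps}(\bar\beta^K)+G_c\|A_1\bar\beta^K+A_2\bar z^K\|.
\]
I would verify this inequality briefly. By Lemma~\ref{lem:smoothing-properties} and the chain rule, $\|\nabla f_{m,\eps}\|\le G_m$. Then
\[
\textstyle\sum_m\alpha_m|f_{m,\eps}(\bar z)-f_{m,\eps}(\bar\theta_m)|\le\sum_m\alpha_mG_m\|\bar\theta_m-\bar z\|\le G_c\|\bar\beta-\mathbf1\otimes\bar z\|
\]
by Cauchy--Schwarz. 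Adding $P(\bar z^K)$ to both sides and using $q\ge G_c$, I get $F_\eps(\bar z^K)-F_\eps^\star\le f_{1,\eps}(\bar\beta^K)+P(\bar z^K)-F_\eps^\star+q\|A_1\bar\beta^K+A_2\bar z^K\|$. Taking expectations and applying \eqref{eq:mainrate} gives the first claim. Here $C_0,C_1$ are the constants of the theorem evaluated at this $q$, for instance $q=G_c$.

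Third, for \eqref{eq:transfer} I use $0\le\rho_{\tau,\eps}-\rho_\tau\le\eps/4$ pointwise from Lemma~\ref{lem:smoothing-properties}. Averaging with weights $\alpha_m/n_m$, which sum to one over all $N$ observations, gives $F\le F_\eps\le F+\eps/4$ uniformly in $\theta$. Therefore $F(\bar z^K)\le F_\eps(\bar z^K)$. Also $F_\eps^\star\le F_\eps(\theta^\star)\le F^\star+\eps/4$ for a minimizer $\theta^\star$ of $F$; if no minimizer exists, I would use an infimizing sequence. Combining, $F(\bar z^K)-F^\star\le F_\eps(\bar z^K)-F_\eps^\star+\eps/4$, and taking expectations finishes the proof.

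The main subtlety is measurability and integrability of $F(\bar z^K)$, which I expect to be routine. The substantive step is making sure the feasibility penalty in the theorem is applied with a $q$ admissible in Theorem~\ref{thm:convex-rate}, where $q>0$ is arbitrary. So this step is not a genuine obstacle, provided the constants are understood to depend on the chosen $q=G_c$.
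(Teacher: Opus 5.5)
Your proposal is correct and follows essentially the same route as the paper. The $G_c$-Lipschitz bound, obtained from $|\psi_{\tau,\eps}|\le\max(\tau,1-\tau)$ and Cauchy--Schwarz, lets the feasibility term with $q\ge G_c$ absorb the replacement of $\bar\beta^K$ by $\mathbf 1_M\otimes\bar z^K$, and the uniform sandwich $F\le F_\eps\le F+\eps/4$ then transfers the bound to the check loss; your extra checks that $F_\eps^\star=\min_z F_\eps(z)$ and that $C_0,C_1$ are the theorem's constants at the chosen $q$ simply make explicit what the paper leaves implicit.
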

Equation~\eqref{eq:transfer} separates stochastic optimization error from smoothing error. A fixed small $\eps$ gives an explicit accuracy floor. Approaching the exact check-loss solution by $\eps_s\downarrow0$ requires increasingly accurate inner solves; a single-loop decreasing-$\eps$ claim needs a separate time-varying analysis because $L_\eps$ diverges as $\eps\downarrow0$.

\begin{proposition}[Strongly convex smooth block]\label{prop:strong-convex}
Suppose the assumptions of Theorem~\ref{thm:convex-rate} hold and, in addition, $f_{1,\eps}$ is $\mu_c$-strongly convex. Define the unweighted ergodic averages
\[
\widehat\beta^K=\frac1K\sum_{k=0}^{K-1}\beta^{k+1},
\qquad
\widehat z^K=\frac1K\sum_{k=0}^{K-1}z^{k+1}.
\]
With
\[
\eta_{k+1}=\{2L_\eps+(k+1)\mu_c\}^{-1},
\]
for every $q>0$ there is a finite constant $C_{\rm sc}$, independent of $K$, such that
\[
\mathbb E\!\left[f_{1,\eps}(\widehat\beta^K)+P(\widehat z^K)-F_\eps^\star
+q\|A_1\widehat\beta^K+A_2\widehat z^K\|\right]
\le \frac{C_{\rm sc}+\sigma_b^2\sum_{k=0}^{K-1}\eta_{k+1}}{K}
=O\!\left(\frac{\log K}{K}\right).
\]
If $q>\|\lambda^\star\|$, the feasibility residual has the same order. Strong convexity must belong to the stochastic smooth block; strong convexity of the coordinator block alone does not yield this telescope.
\end{proposition}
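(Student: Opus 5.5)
The proof follows the same one-step analysis as Theorem~\ref{thm:convex-rate}, with the strongly convex lower bound replacing the plain convexity inequality in the first block.

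\emph{Step 1: per-iteration inequality.} Fix an arbitrary $\lambda$ with $\|\lambda\|\le q$. Write the linearized $\beta$-update \eqref{eq:localclosed} and the coordinator update \eqref{eq:coordinator} as optimality conditions for the stacked two-block problem. Then:
\begin{enumerate}
\item Use the descent lemma with constant $L_\eps$.
\item Replace the convexity inequality by the strong-convexity bound
\[
f_{1,\eps}(\beta)\ge f_{1,\eps}(\beta^k)+\langle\nabla f_{1,\eps}(\beta^k),\beta-\beta^k\rangle+\tfrac{\mu_c}{2}\|\beta-\beta^k\|^2 .
\]
\item Use the three-point identity for the scaled dual update \eqref{eq:dual}.
\end{enumerate}
This should give, for $\Delta^k=g^k-\nabla f_{1,\eps}(\beta^k)$,
\begin{align*}
&f_{1,\eps}(\beta^{k+1})+P(z^{k+1})-F_\eps^\star+\langle\lambda,A_1\beta^{k+1}+A_2z^{k+1}\rangle\\
&\quad\le \frac{1}{2\eta_{k+1}}\bigl(\|\beta^k-\beta^\star\|^2-\|\beta^{k+1}-\beta^\star\|^2\bigr)-\frac{\mu_c}{2}\|\beta^k-\beta^\star\|^2
+\Phi_k-\Phi_{k+1}\\
&\qquad+\langle\Delta^k,\beta^\star-\beta^k\rangle+\eta_{k+1}\|\Delta^k\|^2 .
\end{align*}
Here $\Phi_k$ collects the $\varrho$-weighted terms $\|A_2z^k-A_2z^\star\|^2$ and $\|\lambda-\varrho u^k\|^2/\varrho$. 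These are exactly the terms already handled in the proof of Theorem~\ref{thm:convex-rate}.

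The last term comes from Young's inequality, applied to absorb the $\eta_{k+1}\|\Delta^k\|^2$ cross term. This uses the slack between $1/\eta_{k+1}-L_\eps$ and $1/(2\eta_{k+1})$, which is available because $1/\eta_{k+1}\ge 2L_\eps$.

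\emph{Step 2: telescoping with the new step size.} With $\eta_{k+1}^{-1}=2L_\eps+(k+1)\mu_c$, the coefficient of $\|\beta^k-\beta^\star\|^2$ is
\[
\tfrac{1}{2\eta_{k+1}}-\tfrac{\mu_c}{2}=\tfrac{1}{2\eta_k}.
\]
The sum over $k$ therefore telescopes exactly, leaving $D_\beta^2/(2\eta_0)=L_\eps D_\beta^2$ (with $\eta_0^{-1}=2L_\eps$) plus the initial $\Phi_0$. This is the only place strong convexity is used. If strong convexity belonged only to $P$, it would produce a $\|z-z^\star\|^2$ term that does not match the $\eta$-weighted $\beta$-distances, which is why the remark at the end of the statement holds.

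\emph{Step 3: taking expectations.} The iterate $\beta^k$ is $\mathcal F_k$-measurable, so conditional unbiasedness kills $\mathbb E\langle\Delta^k,\beta^\star-\beta^k\rangle$. Assumption~\ref{ass:oracle} gives $\mathbb E\|\Delta^k\|^2\le\sigma_b^2$.

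\emph{Step 4: averaging and maximizing over $\lambda$.} Divide the summed inequality by $K$ and apply Jensen's inequality to the convex left-hand side with the unweighted averages $\widehat\beta^K,\widehat z^K$. Then take the supremum over $\|\lambda\|\le q$, which turns the linear term into $q\|A_1\widehat\beta^K+A_2\widehat z^K\|$.

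This is the main obstacle. The supremum must be taken before the expectation, so $\lambda$ may depend on the randomness. This is harmless here because $\lambda$ appears only in $\Phi_0$, bounded by $\|\lambda\|^2/\varrho+\ldots\le q^2/\varrho+\ldots$, and in the deterministic telescoping terms. The martingale term does not involve $\lambda$ and is evaluated at the fixed point $\beta^\star$, so no extra martingale argument is needed.

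This yields $C_{\rm sc}=L_\eps D_\beta^2+O(\varrho D_z^2+q^2/\varrho)$ plus the stochastic sum $\sigma_b^2\sum_{k=0}^{K-1}\eta_{k+1}\le\sigma_b^2\mu_c^{-1}(1+\log K)$, giving $O(\log K/K)$.

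\emph{Step 5: feasibility.} For $q>\|\lambda^\star\|$, the saddle-point inequality gives $f_{1,\eps}+P-F_\eps^\star\ge-\|\lambda^\star\|\,\|A_1\beta+A_2z\|$. Subtracting yields $(q-\|\lambda^\star\|)\,\mathbb E\|A_1\widehat\beta^K+A_2\widehat z^K\|=O(\log K/K)$.
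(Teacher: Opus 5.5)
Your proposal is correct and follows essentially the same route as the paper. You add the strong-convexity term $-\tfrac{\mu_c}{2}\|\beta^k-\beta^\star\|^2$ to the one-step inequality of Lemma~\ref{lem:one-step}, and use $\tfrac{1}{2\eta_{k+1}}-\tfrac{\mu_c}{2}=L_\eps+\tfrac{k\mu_c}{2}=\tfrac{1}{2\eta_k}$ so that the first-block distances telescope in the unweighted sum together with the multiplier and $A_2z$ terms; you then remove the martingale term by conditional unbiasedness, apply Jensen, and choose the test multiplier pathwise because it never enters the martingale term, and your explicit constants ($C_{\rm sc}$ of order $L_\eps D_\beta^2+\varrho D_z^2+q^2/\varrho$, and $\sum_k\eta_{k+1}\le\mu_c^{-1}(1+\log K)$) make explicit what the paper leaves implicit.
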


\begin{remark}[Variance-reduced recursive momentum and acceleration]
Because each $f_{m,\eps}$ is a differentiable finite sum, the stochastic variance-reduced recursive momentum estimator of \citet{long2026} could replace the plain local estimator $g_m^k$. This modification may reduce the persistent error associated with fixed mini-batches and is therefore relevant to the weakly convex extension, which requires vanishing estimator error. The $O(1/T)$ result of \citet{long2026}, however, is a stationarity bound for a global finite-sum constrained problem and does not automatically improve Theorem~\ref{thm:convex-rate} to an $O(1/K)$ distributed objective-feasibility rate. Establishing such an improvement would require a new Lyapunov function that incorporates the worker-specific estimator errors and the consensus residual. The accelerated estimator could likewise be incorporated into the local smooth block, but a corresponding distributed rate would require simultaneous control of extrapolation, recursive tracking error, and consensus.
\end{remark}

\section{Scope of the weakly convex extension}\label{sec:weakly-convex}
The convex framework covers the lasso, elastic net, group lasso, and convex constraints. To reduce shrinkage bias, consider
\[
P_{\rm nc}(z)=\sum_{j=1}^p p_{\lambda,a}(|z_j|),
\]
where $p_{\lambda,a}$ is MCP \citep{zhang2010} with $a>1$ or SCAD \citep{fanli2001} with $a>2$. These penalties are weakly convex:
\[
\kappa_{\rm MCP}=1/a,
\qquad \kappa_{\rm SCAD}=1/(a-1).
\]
\begin{proposition}[Well-posed coordinator update]\label{prop:coordinator-update}
If $M\varrho>\kappa$, then
\[
P_{\rm nc}(z)+\frac{M\varrho}{2}\|z-\bar v^{k+1}\|^2
\]
is strongly convex. Hence the MCP or SCAD coordinator update is single-valued and can be evaluated coordinatewise.
\end{proposition}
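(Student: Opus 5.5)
The plan is to reduce the statement to a one-dimensional fact about the scalar penalty $p_{\lambda,a}$ and then use separability. First I will verify weak convexity with the stated moduli. For MCP with $a>1$, $p_{\lambda,a}(t)=\lambda t-t^2/(2a)$ on $0\le t\le a\lambda$ and is constant afterwards. Its derivative $\max(\lambda-t/a,0)$ on $t\ge0$ is nonincreasing with slopes bounded below by $-1/a$.

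From this I will show that $t\mapsto p_{\lambda,a}(|t|)+t^2/(2a)$ is convex on $\mathbb{R}$. The right derivative of this function equals $\lambda+0=\lambda$ at $0^+$ and is nondecreasing on $(0,\infty)$. By symmetry the left derivative at $0$ is $-\lambda\le\lambda$, so the kink at the origin is convex.

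For SCAD with $a>2$ the same check applies to the derivative
\[
\lambda\,\mathbf 1(t\le\lambda)+\frac{(a\lambda-t)_+}{a-1}\mathbf 1(t>\lambda).
\]
This derivative is piecewise linear with slope at least $-1/(a-1)$. It also has a downward jump of size $0$ at $t=\lambda$, since the formula is continuous there. Adding $t^2/(2(a-1))$ therefore gives a nondecreasing derivative, and hence a convex function. I will write $\kappa$ for the relevant modulus, so that $q(t):=p_{\lambda,a}(|t|)+\kappa t^2/2$ is convex in each case.

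Next comes the decomposition. I will write
\[
P_{\rm nc}(z)+\frac{M\varrho}{2}\|z-\bar v^{k+1}\|^2=\sum_{j=1}^p\Bigl[q(z_j)+\frac{M\varrho-\kappa}{2}z_j^2-M\varrho\,\bar v^{k+1}_j z_j\Bigr]+\text{const}.
\]
Each summand is a convex function plus a $(M\varrho-\kappa)$-strongly convex quadratic plus a linear term. Since $M\varrho>\kappa$, the sum is $(M\varrho-\kappa)$-strongly convex on $\mathbb{R}^p$.

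Strong convexity then gives the conclusion. The objective is also continuous and, being strongly convex, coercive, so a minimizer exists and is unique; hence $\prox_{P_{\rm nc}/(M\varrho)}(\bar v^{k+1})$ is single-valued. The additive separability shown above means the minimization splits into $p$ independent scalar strongly convex problems, each with a unique minimizer. These can be written in closed form as the usual MCP/SCAD thresholding rules with the effective parameter $M\varrho$.

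The only step needing care is the convexity of $q$ at the points where the penalty's derivative is not differentiable: $t=0$, $t=\lambda$ for SCAD, and $t=a\lambda$. I will handle these by the monotone-one-sided-derivative criterion for convexity of a continuous piecewise-$C^1$ function, checking left and right derivatives at each breakpoint.
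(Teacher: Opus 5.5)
Your proof is correct and follows the paper's argument: you use the same decomposition of the coordinator objective into the convex function $P_{\rm nc}(z)+\frac{\kappa}{2}\|z\|^2$, plus the $(M\varrho-\kappa)$-strongly convex quadratic, plus a linear term. Beyond the paper, you verify the MCP and SCAD weak-convexity moduli via monotone one-sided derivatives, and you make existence, uniqueness and coordinatewise separability explicit; the paper only asserts these.
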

Proposition~\ref{prop:coordinator-update} establishes well-posedness of the coordinator subproblem but does not imply global optimality of the full nonconvex problem. The proof of Theorem~\ref{thm:convex-rate} uses the convex subgradient inequality and Jensen's inequality, neither of which applies directly to MCP or SCAD. Moreover, with a constant step size, fixed-mini-batch noise generally permits convergence only to a neighborhood of stationarity. The next result is deliberately stated as a conditional implication: it isolates sufficient descent and relative-error properties, rather than claiming that Algorithm~\ref{alg:dss} with a fixed mini-batch automatically satisfies them.

\begin{theorem}[Conditional KKT implication under inexact-ADMM estimates]\label{thm:conditional-stationarity}
Let $e^k=\widetilde\nabla f_{1,\eps}(\beta^k)-\nabla f_{1,\eps}(\beta^k)$ and $w^k=(\beta^k,z^k,\lambda^k)$. Suppose that $P_{\rm nc}$ is proper, lower semicontinuous, and $\kappa$-weakly convex, that $f_{1,\eps}$ has an $L_\eps$-Lipschitz gradient, and that $M\varrho>\kappa$. Assume that the generated sequence is bounded and that there exists a lower-bounded proximal-ADMM Lyapunov function $\Phi$ whose critical points coincide with the KKT points and that satisfies, almost surely,
\begin{align}
\Phi(w^{k+1})\le\Phi(w^k)-c_1\|w^{k+1}-w^k\|^2+c_2\|e^k\|^2,
\label{eq:ncdescent}\\
\dist(0,\partial\Phi(w^{k+1}))\le c_3\|w^{k+1}-w^k\|+c_3\|e^k\|.
\label{eq:ncrelative}
\end{align}
If $\sum_k\|e^k\|^2<\infty$, then $\|w^{k+1}-w^k\|\to0$, the consensus residual converges to zero, and every cluster point satisfies the limiting-subdifferential KKT system. If, in addition, $\Phi$ has the Kurdyka--{\L}ojasiewicz property and $\sum_k\|e^k\|<\infty$, the whole sequence has finite length and converges to one KKT point.
\end{theorem}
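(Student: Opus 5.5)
The proof runs along the standard Attouch--Bolte--Svaiter route, treating \eqref{eq:ncdescent} and \eqref{eq:ncrelative} as given hypotheses. We perturb them by the summable error sequence.

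\textbf{Step 1 (summability of increments).} Sum \eqref{eq:ncdescent} from $k=0$ to $K-1$ and use that $\Phi$ is bounded below by some $\underline\Phi$:
\[
c_1\sum_{k=0}^{K-1}\|w^{k+1}-w^k\|^2\le \Phi(w^0)-\underline\Phi+c_2\sum_{k}\|e^k\|^2<\infty .
\]
Hence $\sum_k\|w^{k+1}-w^k\|^2<\infty$ and $\|w^{k+1}-w^k\|\to0$. The same estimate shows that $\Phi(w^k)+c_2\sum_{j\ge k}\|e^j\|^2$ is nonincreasing and bounded below, so $\Phi(w^k)$ converges to some limit $\Phi^\ast$.

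\textbf{Step 2 (consensus residual).} In the stacked form the scaled dual update \eqref{eq:dual} gives $\lambda^{k+1}-\lambda^k=\varrho(A_1\beta^{k+1}+A_2z^{k+1})$. The consensus residual is therefore bounded by $\varrho^{-1}\|w^{k+1}-w^k\|\to0$.

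\textbf{Step 3 (cluster points are KKT).} Boundedness supplies cluster points. Take $w^{k_j}\to w^\ast$. By \eqref{eq:ncrelative}, together with $\|e^k\|\to0$ (which follows from square-summability) and Step 1, we get $\dist(0,\partial\Phi(w^{k_j}))\to0$.

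To pass to the limit in the limiting subdifferential we need $\Phi(w^{k_j})\to\Phi(w^\ast)$. This is the main obstacle, because $P_{\rm nc}$ is only lower semicontinuous. I would handle it with the usual argument. Take the minimizing property of $z^{k_j}$ in the coordinator subproblem, which is well posed by Proposition~\ref{prop:coordinator-update}, and compare it with the candidate $z^\ast$. Using $\|z^{k_j}-z^{k_j-1}\|\to0$, this gives $\limsup P_{\rm nc}(z^{k_j})\le P_{\rm nc}(z^\ast)$. The remaining terms of $\Phi$ are continuous.

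Closedness of the graph of the limiting subdifferential then gives $0\in\partial\Phi(w^\ast)$. By hypothesis, critical points of $\Phi$ coincide with KKT points.

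\textbf{Step 4 (finite length under KL).} Now assume in addition that $\Phi$ has the KL property and $\sum_k\|e^k\|<\infty$. The set $\Omega$ of cluster points is compact and $\Phi$ equals the constant $\Phi^\ast$ on it, so the uniformized KL inequality applies with a desingularizer $\varphi$ near $\Omega$.

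To absorb the errors, work with the modified energy $\Psi_k=\Phi(w^k)+c_2\sum_{j\ge k}\|e^j\|^2$, which is nonincreasing by Step 1. Write $\Delta_k=\|w^{k+1}-w^k\|$. Concavity of $\varphi$ and the KL inequality give
\[
c_1\Delta_k^2\le\bigl[\varphi(\Psi_k-\Phi^\ast)-\varphi(\Psi_{k+1}-\Phi^\ast)\bigr]\cdot c_3(\Delta_{k-1}+\|e^{k-1}\|)+\text{error terms}.
\]
Applying $2\sqrt{ab}\le a+b$ turns this into the standard recursion
\[
2\Delta_k\le\Delta_{k-1}+\|e^{k-1}\|+C\bigl[\varphi_k-\varphi_{k+1}\bigr].
\]
Summing, telescoping the $\varphi$ terms, and using $\sum_k\|e^k\|<\infty$ yields $\sum_k\Delta_k<\infty$.

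The delicate point is that KL is stated for $\Phi$, not for $\Psi$. I would handle this either by applying KL to $\Phi$ and bounding the tail $\sum_{j\ge k}\|e^j\|^2\le(\sum_{j\ge k}\|e^j\|)^2$, or by applying KL to the augmented function $(w,s)\mapsto\Phi(w)+s^2$, which inherits the KL property.

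Finite length makes $(w^k)$ a Cauchy sequence, so the whole sequence converges to a single point. By Step 3 that point is a KKT point.
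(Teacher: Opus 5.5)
\textbf{Steps 1--3.} These follow the paper's own argument: sum \eqref{eq:ncdescent}, read the consensus residual off the dual update, and close the graph of the limiting subdifferential. Your Step 3 is in fact more careful than the paper, which never checks $\Phi(w^{k_j})\to\Phi(w^\ast)$. Your comparison with $z^\ast$ in the coordinator subproblem supplies this, but it needs $w^{k_j-1}\to w^\ast$, since the subproblem involves $\beta^{k_j}$ and $\lambda^{k_j-1}$, not just the $z$-increment. For MCP and SCAD the point is moot anyway, because both penalties are continuous.

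\textbf{The gap is in Step 4.} KL at $w^k$ bounds $\varphi'(\Phi(w^k)-\Phi^\ast)$ from below. The concavity step for the monotone quantity $\Psi_k$, however, needs a lower bound on $\varphi'(\Psi_k-\Phi^\ast)$. Since $\Psi_k\ge\Phi(w^k)$ and $\varphi'$ is nonincreasing, the comparison goes the wrong way. Moreover, $\Phi(w^k)$ may lie below $\Phi^\ast$, where KL says nothing. Your unspecified ``error terms'' are exactly this mismatch, and neither repair closes it under $\sum_k\|e^k\|<\infty$.
\begin{itemize}
\item Augmented function. Take $\Theta(w,s)=\Phi(w)+s^2$ at $(w^k,s_k)$ with $s_k^2=c_2\sum_{j\ge k}\|e^j\|^2$. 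Then $\partial\Theta(w^k,s_k)=\partial\Phi(w^k)\times\{2s_k\}$, so the relative-error bound picks up an extra $2s_k$. The telescoped recursion then needs $\sum_k s_k<\infty$. For $\|e^j\|=(j+1)^{-3/2}$ the errors are summable but $s_k\asymp(k+1)^{-1}$, so this fails.
\item Tail bound. The estimate $s_k\le\sqrt{c_2}\sum_{j\ge k}\|e^j\|$ at best converts the requirement into $\sum_k(k+1)\|e^k\|<\infty$, which is again strictly stronger than the hypothesis.
\end{itemize}

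\textbf{The hole cannot be closed under the stated hypotheses.} The paper's appeal to ``the standard finite-length argument'' has the same hole. Finite length does not follow from \eqref{eq:ncdescent}, \eqref{eq:ncrelative}, KL and $\sum_k\|e^k\|<\infty$ alone. Here is a counterexample in $\mathbb R^2$, using polar coordinates.
\begin{itemize}
\item Let $\Phi(w)=r^4S(\vartheta;r)$, where $S\in[1,2]$ decreases slowly in $\vartheta$ around most of the circle and rises back over an angular window of width $r$.
\item Every element of $\partial\Phi(w)$ has radial component at least $3r^3$. Hence $0$ is the only critical point, and KL holds there with $\varphi(t)\propto t^{1/4}$.
\item Take $c_3/c_1$ large enough. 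Exact steps ($e^k=0$) of length $\asymp r^3$ carry the iterate along the slow part of the circle of radius $r$, a distance $\asymp r$, while satisfying both inequalities. Inward radial steps are likewise free.
\item A single step with $\|e^k\|\asymp r^2$ crosses the window, raising $\Phi$ by at most $r^4$.
\item Laps at radii $r_i=1/i$ ($i\ge i_0$) give a bounded sequence with $\sum_k\|e^k\|\asymp\sum_i i^{-2}<\infty$, but with length $\asymp\sum_i i^{-1}=\infty$.
\end{itemize}

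\textbf{What would make it work.} The standard argument does go through with only $\sum_k\|e^k\|<\infty$ when the descent inequality is exact and the error enters only \eqref{eq:ncrelative}. With the additive $c_2\|e^k\|^2$ in the descent inequality, you need an extra hypothesis such as $\sum_k(\sum_{j\ge k}\|e^j\|^2)^{1/2}<\infty$. Under it, your augmented-function route works once you verify that $\Theta$ is KL, for example with desingularizer $2\varphi(t/2)+\sqrt{2t}$.
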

Here the KKT system is
\begin{align*}
0&=\nabla f_{1,\eps}(\beta^\infty)+A_1^{\mathsf{T}}\lambda^\infty,\\
0&\in\partial P_{\rm nc}(z^\infty)+A_2^{\mathsf{T}}\lambda^\infty,\\
0&=A_1\beta^\infty+A_2z^\infty.
\end{align*}
Theorem~\ref{thm:conditional-stationarity} is an implication under the stated descent and relative-error estimates; it does not assert that fixed mini-batches satisfy them automatically. Such conditions are standard in inexact nonconvex proximal-ADMM analyses \citep{bolte2014,hong2016,wang2019,rockafellar1998}. Full gradients give $e^k=0$. A finite-data batch schedule that eventually reaches the full local sample also makes $e^k=0$ after finitely many iterations, whereas a fixed mini-batch generally does not. Appendix~\ref{app:weakly-convex} gives the conditional argument and clarifies sufficient error regimes.

\section{Numerical experiments}\label{sec:experiments}
\subsection{Implementation and evaluation design}
The numerical studies use a single-process emulation of synchronous horizontally distributed learning: workers retain their observations, independently sample local mini-batches, and exchange only coefficient vectors. Reported wall-clock times therefore measure computation in this serial emulation; per-observation gradient evaluations and communication rounds provide implementation-independent workload measures.

The experiments were run under Linux x86\_64 on an Intel Xeon Platinum 8370C CPU (5 vCPUs), using Python 3.13.5, NumPy 2.3.5, pandas 2.2.3, scikit-learn 1.8.0, and statsmodels 0.14.6. All random seeds were fixed. Synthetic predictors were generated from a Gaussian distribution with AR(1) covariance $\Sigma_{jk}=0.3^{|j-k|}$, and the first $s$ coefficients were nonzero. Responses were generated with heteroscedastic Student-$t_3$ or contaminated-normal errors. Each synthetic data set was split into 75\% training observations and 25\% test observations before fitting. Predictor columns and the response were centered and scaled using the training-sample means and standard deviations, and the same transformations were applied to the test sample; consequently, the reported pinball losses are on the standardized response scale.

Unless stated otherwise, $\tau=0.5$, $M=8$, $\eps=0.08$, $\varrho=1$, and lasso regularization was used. The implementation uses the practical schedule $\eta_{k+1}=0.98\{2L_\eps+c\sqrt{k+1}\}^{-1}$, with $c=0.12$ in the synthetic experiments and $c=0.10$ in the two real-data studies. This has the same $k^{-1/2}$ decay as the stochastic schedule used in Theorem~\ref{thm:convex-rate}, while avoiding the unknown theoretical constant $D_\beta/\sigma_b$. No tolerance-based stopping rule is used in the reported experiments; each run uses the fixed communication horizon stated in the corresponding subsection. The reported training objective is the original nonsmoothed penalized objective $F(z)$, the test loss is the unpenalized mean check loss, and the consensus diagnostic is $\{M^{-1}\sum_{m=1}^M\|\theta_m-z\|^2\}^{1/2}$. A coefficient is counted as selected when $|z_j|>10^{-6}$; selection F1 scores use the $10^{-5}$ support threshold implemented in the released code. We compare full-batch and mini-batch DSS-ADMM, distributed proximal SGD, and the centralized linear-programming implementation in \texttt{QuantileRegressor} \citep{pedregosa2011}.

\subsection{Stochastic computation and consensus}
Figure~\ref{fig:stoch} shows one representative run with $n=2600$, $p=120$, $\lambda=0.018$, and 600 communication rounds. Over the plotted horizon, the objective values of the mini-batch variants stabilize after substantially fewer per-observation gradient evaluations than the full-batch variant uses. Smaller batches reduce the cost per round but yield a larger consensus residual at the fixed termination horizon. Distributed proximal SGD is included in the objective panel as a first-order baseline; because it maintains a single global iterate rather than worker-specific primal copies, the DSS-ADMM consensus diagnostic is not directly comparable for that method.

\begin{center}
\begin{minipage}{0.96\textwidth}
\centering
\includegraphics[width=.94\textwidth]{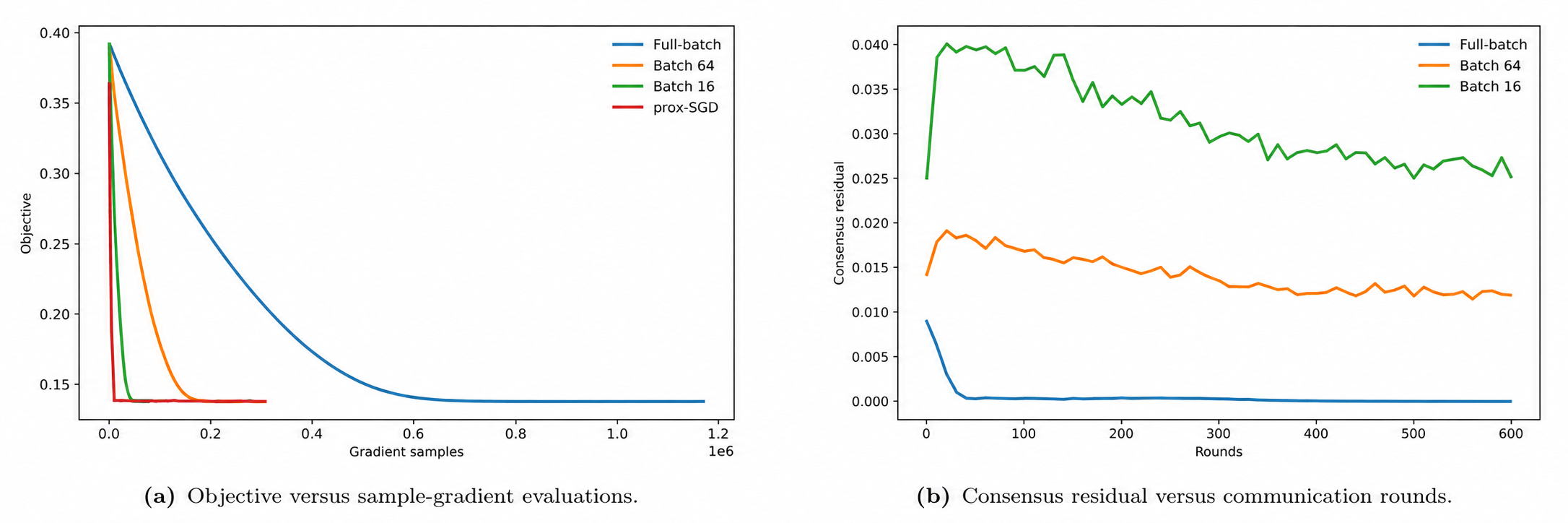}
\captionof{figure}{A representative heavy-tailed simulation: objective versus sampled-gradient evaluations and consensus residual versus communication rounds.}
\label{fig:stoch}
\end{minipage}
\end{center}

\begin{table*}[t]
\caption{Contaminated-error simulations: mean (standard deviation) over five replications.}
\label{tab:main}
\centering
\small
\resizebox{\textwidth}{!}{%
\begin{tabular}{lrrrrrr}
\toprule
Method & Train obj. & Test loss & Consensus & Selected & Samples & Time (s)\\
\midrule
Full-batch DSS-ADMM & .1637 (.0118) & .1148 (.0120) & .000002 & 16.20 & 900000 & .274\\
DSS-ADMM, batch 64 & .1637 (.0118) & .1148 (.0119) & .011313 & 16.40 & 307200 & .202\\
DSS-ADMM, batch 16 & .1641 (.0118) & .1150 (.0118) & .026160 & 17.00 & 76800 & .171\\
Centralized LP & .1636 (.0118) & .1147 (.0119) & 0 & 15.20 & -- & --\\
\bottomrule
\end{tabular}
}
\end{table*}

Table~\ref{tab:main} summarizes five contaminated-error replications with $n=2000$, $p=100$, $s=10$, $\lambda=0.02$, and 600 communication rounds. The batch-64 variant used roughly one third as many per-observation gradient evaluations as the full-batch variant while achieving a similar test pinball loss. The batch-16 variant used less than one tenth as many evaluations but produced a larger consensus residual and a denser estimate at the fixed termination horizon. These results demonstrate a computational trade-off; they do not imply that smaller batches improve variable selection.

\subsection{Partition heterogeneity and scaling}
For the non-IID partition, observation $i$ is assigned the score $s_i=x_{i1}+0.35x_{i2}+0.05\xi_i$, where $\xi_i\sim N(0,1)$ is independent jitter. Observations are sorted by $s_i$ and contiguous score bands are assigned to workers, producing a controlled covariate shift. Table~\ref{tab:iid} reports similar objective values and test losses for the IID and non-IID partitions. Thus, in this experiment, the moderate covariate shift had little effect under full worker participation.

\begin{table}[ht]
\caption{IID and non-IID partitions: mean (standard deviation) over four replications.}
\label{tab:iid}
\centering
\footnotesize
\setlength{\tabcolsep}{3.5pt}
\begin{tabular}{lrrr}
\toprule
Partition & Train & Test & Consensus\\
\midrule
IID & .1131 (.0092) & .06518 (.00306) & .01051\\
Non-IID & .1132 (.0092) & .06522 (.00302) & .00970\\
\bottomrule
\end{tabular}
\end{table}

Figure~\ref{fig:scale} reports emulated computation times for $p=80$ and 350 rounds. At $n=7200$, batch 32 accessed 89,600 observations, batch 128 accessed 358,400, and the full-batch method accessed 1,890,000. The corresponding objectives were 0.12999, 0.12990, and 0.12988. These timings do not include an actual communication network.

\begin{center}
\begin{minipage}{0.96\textwidth}
\centering
\includegraphics[width=0.8\columnwidth]{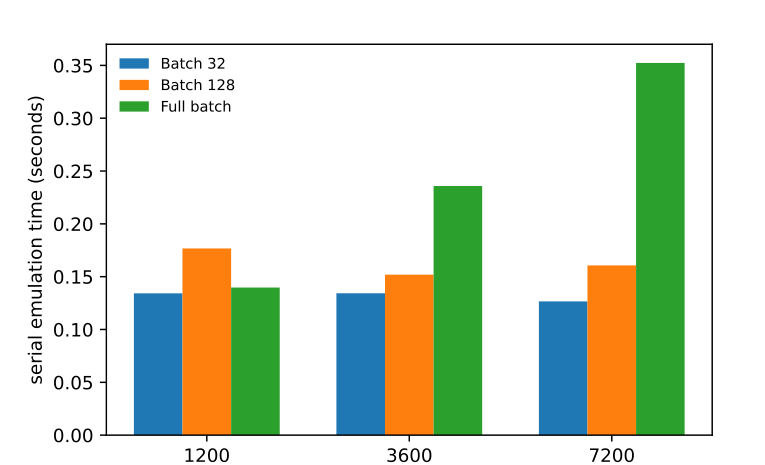}
\captionof{figure}{Serial emulation time over 350 rounds.}
\label{fig:scale}
\end{minipage}
\end{center}

\subsection{Worker count and smoothing sensitivity}
We vary the number of workers while keeping the total mini-batch budget at 128 observations per round. Table~\ref{tab:workers} shows stable test loss for $M=2,4,8,16$. Communicated vectors per round grow with $M$, while the finite-horizon consensus residual remains of the same order. The smoothing parameter controls both approximation and conditioning. In Figure~\ref{fig:sens}, very small smoothing gives a larger finite-horizon loss because the gradient Lipschitz constant increases; values between 0.05 and 0.12 perform similarly.

\begin{table}[ht]
\caption{Worker sensitivity: mean over four replications.}
\label{tab:workers}
\centering
\small
\begin{tabular}{rrrr}
\toprule
$M$ & Batch/worker & Test loss & Consensus\\
\midrule
2 & 64 & .06494 & .02550\\
4 & 32 & .06462 & .02598\\
8 & 16 & .06492 & .02514\\
16 & 8 & .06495 & .02295\\
\bottomrule
\end{tabular}
\end{table}

\begin{center}
\begin{minipage}{0.96\textwidth}
\centering
\includegraphics[width=.94\textwidth]{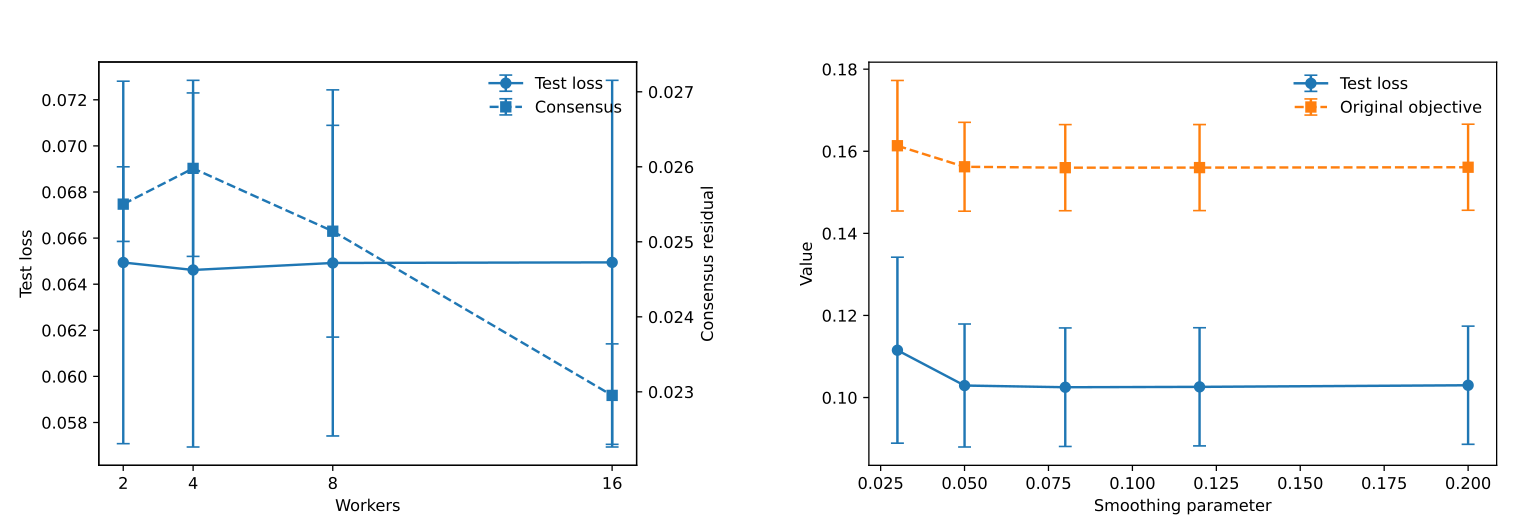}
\captionof{figure}{Sensitivity to the number of workers and the smoothing parameter. Error bars show one standard deviation.}
\label{fig:sens}
\end{minipage}
\end{center}

\subsection{Regularization and mini-batch interaction}
Figure~\ref{fig:heat} compares five lasso regularization levels and three batch regimes over five replications. Smaller $\lambda$ values yield lower prediction loss but select more noise variables. Values near $\lambda=0.04$ produce nearly perfect support recovery for the batch-64 and full-batch methods, with a modest increase in test loss. In this experiment, mini-batch size affects finite-horizon consensus and selection accuracy more strongly than prediction loss.

\begin{center}
\begin{minipage}{0.96\textwidth}
\centering
\includegraphics[width=.88\textwidth]{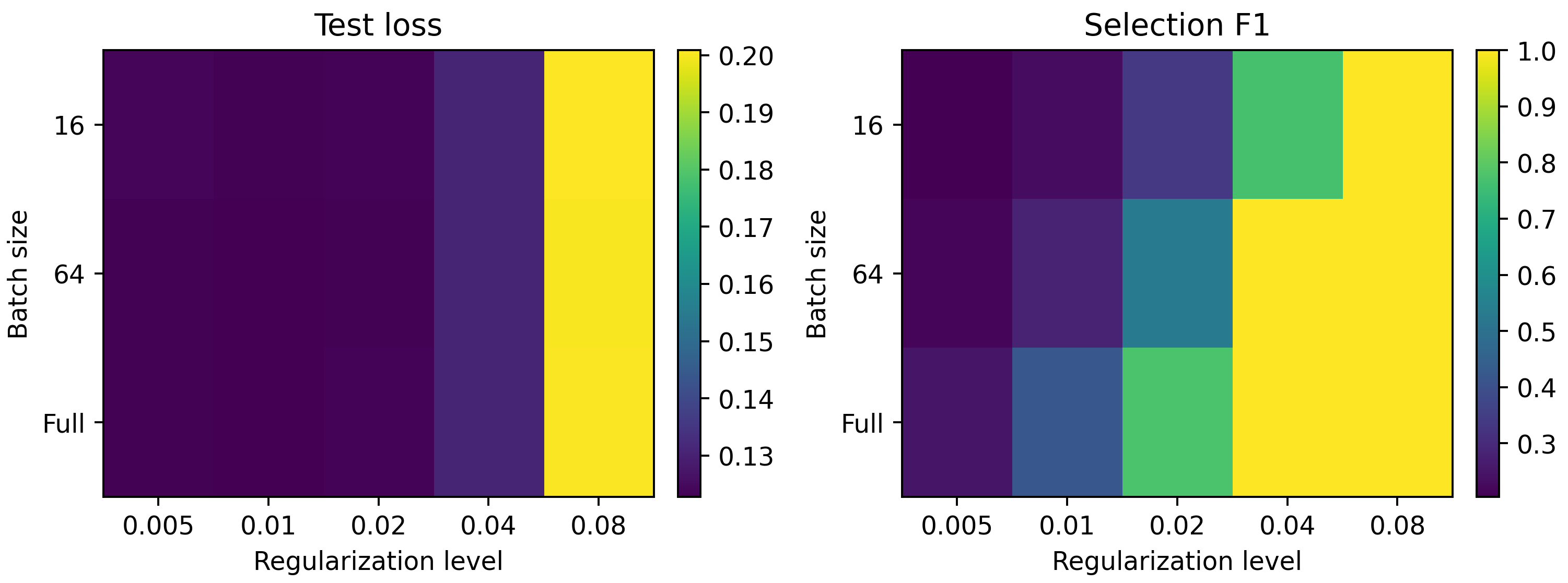}
\captionof{figure}{Joint sensitivity to the lasso regularization level and mini-batch size. Entries are means over five replications.}
\label{fig:heat}
\end{minipage}
\end{center}

\subsection{Error robustness and feature dimension}
We consider Gaussian, Student-$t_3$, Laplace, and contaminated-normal errors at three quantile levels. Table~\ref{tab:errors} reports the test pinball loss. Across the settings considered, the losses vary only moderately across quantile levels, whereas contaminated errors produce the largest losses. We also vary the number of features from 40 to 320 under a fixed per-observation gradient-evaluation budget. Figure~\ref{fig:robust} shows a moderate increase in serial emulation time and consensus residual as the dimension increases.

\begin{table}[t]
\caption{Test loss under four error distributions: mean over three replications.}
\label{tab:errors}
\centering
\small
\begin{tabular}{lrrr}
\toprule
Error & $\tau=.25$ & $\tau=.50$ & $\tau=.75$\\
\midrule
Gaussian & .08018 & .08093 & .08177\\
Student-$t_3$ & .08350 & .08337 & .08326\\
Laplace & .07898 & .07675 & .07459\\
Contaminated & .13121 & .13194 & .13266\\
\bottomrule
\end{tabular}
\end{table}

\begin{center}
\begin{minipage}{0.96\textwidth}
\centering
\includegraphics[width=.94\textwidth]{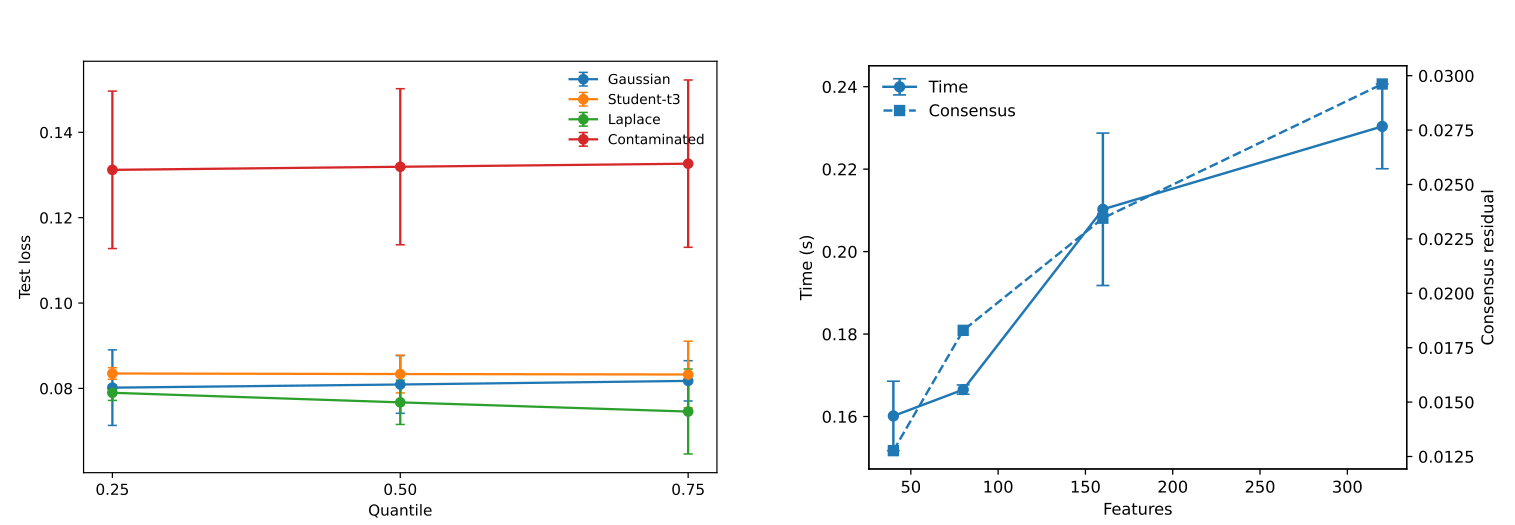}
\captionof{figure}{Robustness across error distributions and dimensional scaling.}
\label{fig:robust}
\end{minipage}
\end{center}

\subsection{Unequal site sizes and predictor correlation}
The formulation accommodates unequal site sizes through $\alpha_m=n_m/N$. We compare balanced local samples with moderate and severe imbalance while holding the total sample size, number of workers, batch size, and number of rounds fixed. The largest-to-smallest local-size ratios are 1, 2, and 8. Across five replications, the mean test losses are 0.06689, 0.06689, and 0.06720, and the mean consensus residuals are 0.01793, 0.01697, and 0.01765. In these experiments, sample-size weighting is associated with nearly unchanged predictive accuracy despite substantial imbalance.

We also vary the AR(1) predictor correlation from 0 to 0.9 under contaminated errors. Test loss remains near 0.12 for correlations up to 0.3, then rises to 0.1398 at correlation 0.6 and 0.1952 at correlation 0.9. Selection F1 decreases from approximately 0.96--0.97 to 0.78, reflecting the statistical difficulty of support recovery under severe collinearity rather than a failure of distributed consensus. Figure~\ref{fig:extra} jointly summarizes the site-imbalance and predictor-correlation sensitivity results.

\begin{center}
\begin{minipage}{0.96\textwidth}
\centering
\includegraphics[width=.94\textwidth]{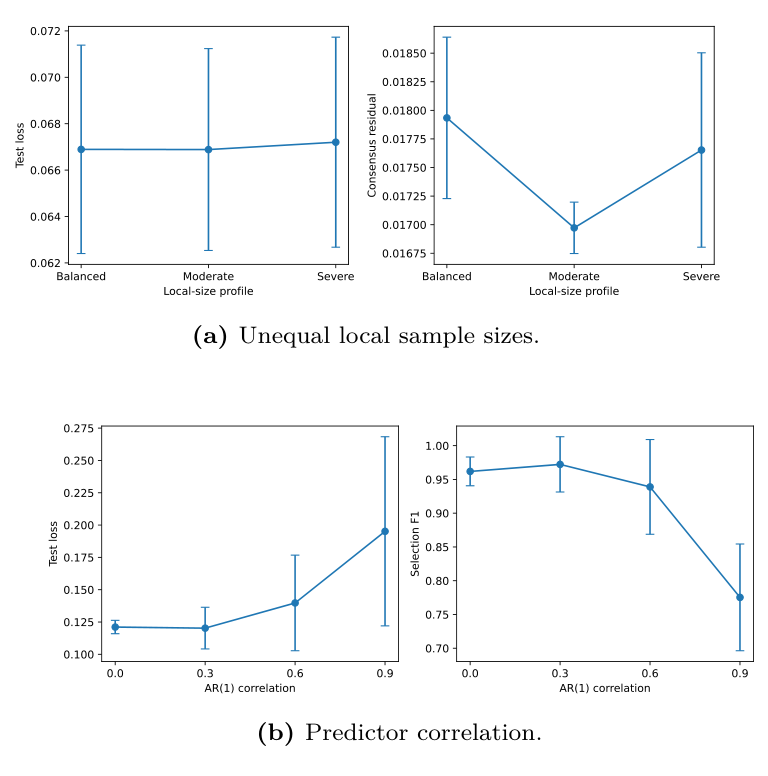}
\captionof{figure}{Additional distributed and statistical sensitivity experiments. Error bars show one standard deviation over five replications.}
\label{fig:extra}
\end{minipage}
\end{center}

\subsection{Weakly convex extension}
We evaluate lasso, MCP, and SCAD using full local gradients, which constitute a deterministic special case satisfying the vanishing-noise condition in Theorem~\ref{thm:conditional-stationarity}. In four contaminated-error replications with $n=1200$ and $p=80$, the mean test pinball losses were 0.12481, 0.12383, and 0.12373, respectively. Because the penalty functions differ, their penalized objective values are not directly comparable; predictive loss and selection measures provide more meaningful cross-penalty comparisons.

To examine the stochastic-error requirement, we compare a fixed batch of 16, the increasing schedule $b_k=\min\{n_m,\lceil8\sqrt{k+1}\rceil\}$, and full local gradients. The residual in Figure~\ref{fig:kkt} combines the full-gradient first-block KKT residual, consensus residual, and coordinator proximal residual. Fixed batches remain in a stationarity neighborhood, whereas increasing batches nearly match the full-gradient residual while using fewer per-observation gradient evaluations. Across five replications, the mean residuals for MCP versus SCAD were 0.03592 versus 0.03558 with fixed batches, 0.002525 versus 0.002528 with increasing batches, and 0.002523 versus 0.002527 with full gradients. Thus, MCP has a slight residual advantage under the increasing-batch and full-gradient regimes, while SCAD is marginally better under the fixed-batch regime. These differences are small relative to the reported standard deviations, so the experiment supports comparable stationarity behavior rather than a general ranking of the two penalties.

\begin{center}
\begin{minipage}{0.96\textwidth}
\centering
\includegraphics[width=\columnwidth]{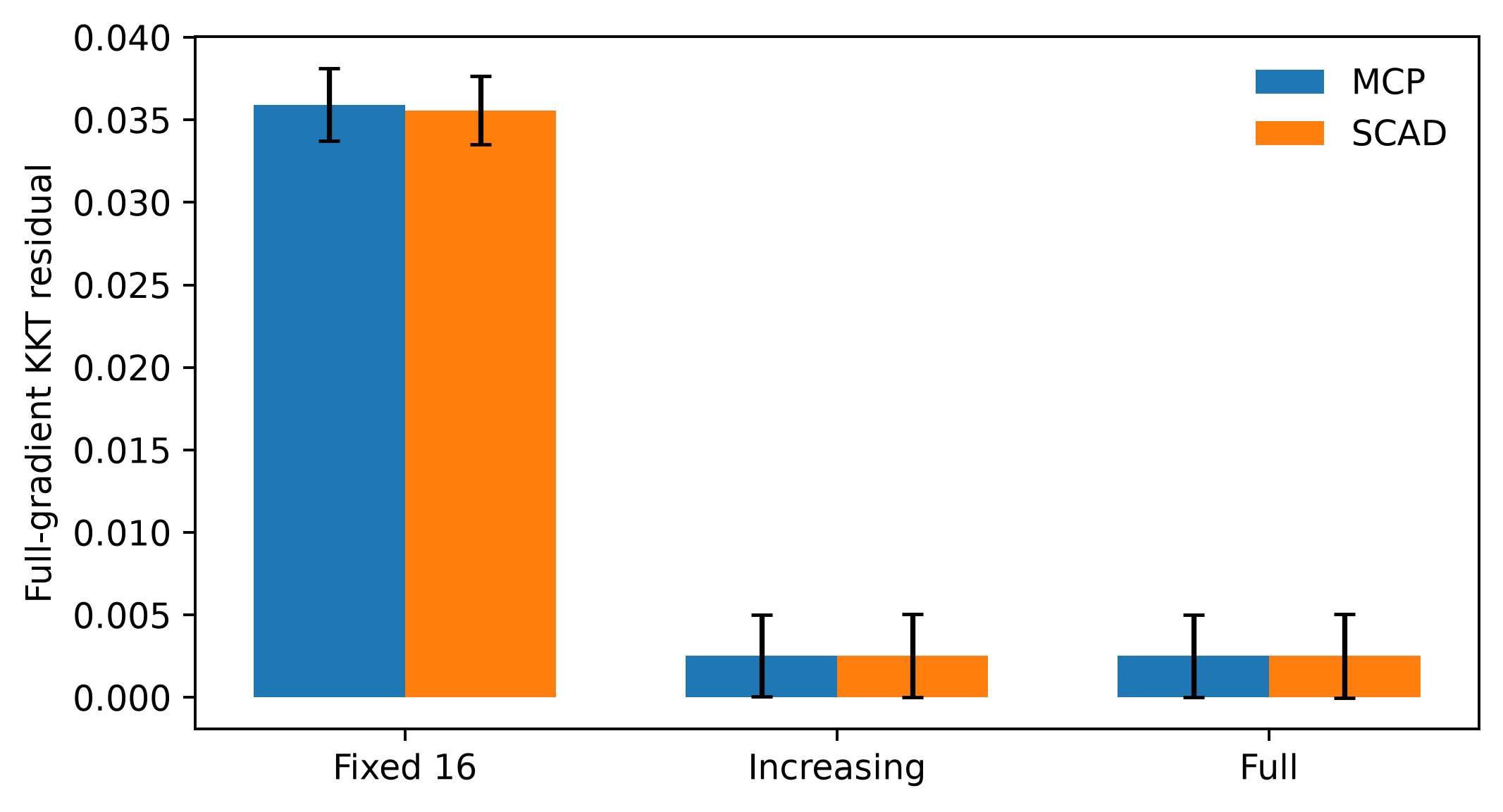}
\captionof{figure}{Full-gradient KKT residuals for MCP and SCAD under fixed, increasing, and full-batch schedules. MCP is marginally lower for the increasing and full schedules, whereas SCAD is marginally lower for the fixed schedule. Error bars show one standard deviation over five replications.}
\label{fig:kkt}
\end{minipage}
\end{center}

\subsection{Diabetes data}
The diabetes data contain 442 observations and ten baseline variables. We use eight random 75/25 splits, four non-IID workers, and quantile levels 0.25, 0.50, and 0.75. Within each split, predictors and the response are standardized using training-sample means and standard deviations only, so Table~\ref{tab:diabetes} reports pinball loss on the standardized response scale. We use $\lambda=0.012$, $\eps=0.06$, batch size 16 per worker, $\varrho=1$, and 650 communication rounds. DSS-ADMM yields test losses close to those of the centralized LP benchmark at each quantile, with a consensus residual of order $10^{-2}$ at the fixed horizon.

\begin{table}[t]
\caption{Diabetes test pinball loss: mean (standard deviation) over eight splits.}
\label{tab:diabetes}
\centering
\small
\begin{tabular}{rrr}
\toprule
$\tau$ & DSS-ADMM & Centralized LP\\
\midrule
0.25 & .2805 (.0166) & .2824 (.0164)\\
0.50 & .2834 (.0126) & .2849 (.0118)\\
0.75 & .2862 (.0216) & .2873 (.0203)\\
\bottomrule
\end{tabular}
\end{table}

\subsection{Engel data}
As an additional real-data study, we analyze the Engel expenditure data from \texttt{statsmodels}. The response is household food expenditure and the predictor block consists of six polynomial features of log income. We use eight random 70/30 splits, four non-IID workers, and quantile levels 0.25, 0.50, and 0.75. Within each split, the six predictor columns and the response are standardized using training-sample means and standard deviations only; Table~\ref{tab:engel} therefore reports pinball loss on the standardized response scale. We use $\lambda=0.010$, $\eps=0.06$, batch size 16 per worker, $\varrho=1$, and 650 communication rounds.

\begin{table}[t]
\caption{Engel data test pinball loss: mean (standard deviation) over eight splits.}
\label{tab:engel}
\centering
\small
\begin{tabular}{rrr}
\toprule
$\tau$ & DSS-ADMM & Centralized LP\\
\midrule
0.25 & .1320 (.0183) & .1372 (.0275)\\
0.50 & .1313 (.0131) & .1344 (.0188)\\
0.75 & .1306 (.0192) & .1315 (.0202)\\
\bottomrule
\end{tabular}
\end{table}

\section{Discussion and conclusion}\label{sec:conclusion}
DSS-ADMM combines smoothing of the check loss, stochastic approximation, and distributed consensus in a single inversion-free procedure. Stacking all worker models into one first block makes the consensus formulation a standard two-block stochastic ADMM problem, while the coordinator proximal map keeps convex regularization modular. The convex analysis controls the objective gap and consensus violation jointly, and the explicit $\eps/4$ approximation bound connects the smoothed problem to the original nonsmooth objective.

The numerical results suggest that mini-batches can substantially reduce per-observation gradient evaluations while maintaining predictive performance across the heterogeneous partitions and error distributions considered here, as well as in the two real-data examples. Fixed smoothing creates a nonzero approximation floor, and a decreasing smoothing schedule therefore requires a time-varying analysis. Fixed mini-batches do not in general yield exact stationarity for MCP or SCAD; increasing batch sizes, full gradients, or variance reduction are needed to satisfy the vanishing-error conditions in the nonconvex result.

Partial participation, delayed communication, compression, secure aggregation, differential privacy, and vertical feature partitioning would introduce additional bias, variance, or splitting constraints. From a statistical perspective, DSS-ADMM estimates one common conditional quantile model across sites. Personalized or hierarchical quantile models may be preferable when site-specific effects are substantial. Subject to this interpretation, the framework provides a scalable approach to convexly penalized distributed quantile regression and a conditional extension to weakly convex penalties: MCP and SCAD are computationally admissible when coordinator curvature dominates weak concavity, whereas the convergence guarantee concerns KKT stationarity under vanishing stochastic error rather than global optimality.

\section*{Data availability}
Synthetic data are generated by the released code. The diabetes data are distributed with scikit-learn, and the Engel data are distributed with statsmodels. No confidential or restricted data were used.

\section*{Code availability}
The source code, experiment drivers, and figure-generation scripts accompanying this article are publicly available at \url{https://github.com/xfwu1016/DSS-ADMM}.

\appendix
\section{Proofs for the convex results}\label{app:convex-proofs}
\subsection{Stacked formulation and smoothing bounds}
For fixed $\eps>0$, define
\begin{align*}
f_{m,\eps}(\theta)=\frac1{n_m}\sum_{i=1}^{n_m}
\rho_{\tau,\eps}(y_{mi}-x_{mi}^{\mathsf{T}}\theta),\quad f_{1,\eps}(\beta)=\sum_{m=1}^M\alpha_mf_{m,\eps}(\theta_m).
\end{align*}
With $A_1=I_{Mp}$ and $A_2=-(\mathbf1_M\otimes I_p)$, the smoothed distributed problem is
\begin{equation}
\min_{\beta,z}f_{1,\eps}(\beta)+P(z)
\quad\text{s.t.}\quad A_1\beta+A_2z=0.
\label{eq:app_problem}
\end{equation}
The $M$ local models constitute one stacked first block; hence \eqref{eq:app_problem} is a two-block problem.

Let $\mathcal F_k$ be the history before the mini-batches at iteration $k$ are drawn and define
\[
\widetilde\nabla f_{1,\eps}(\beta^k)=((g_1^k)^{\mathsf{T}},\ldots,(g_M^k)^{\mathsf{T}})^{\mathsf{T}}.
\]
Then
\begin{align}
\mathbb{E}_k[\widetilde\nabla f_{1,\eps}(\beta^k)]=\nabla f_{1,\eps}(\beta^k),\quad \mathbb{E}_k\|\widetilde\nabla f_{1,\eps}(\beta^k)-\nabla f_{1,\eps}(\beta^k)\|^2\le\sigma_b^2.
\label{eq:app_oracle}
\end{align}

\begin{lemma}[Smoothing bounds]\label{lem:app-smoothing}
For every $u\in\mathbb{R}$, $0\le\rho_{\tau,\eps}(u)-\rho_\tau(u)\le\eps/4$. Moreover, $\psi_{\tau,\eps}$ is $1/(2\eps)$-Lipschitz and
\[
\|\nabla f_{m,\eps}(\theta)-\nabla f_{m,\eps}(\theta')\|
\le\frac{\|X_m\|_2^2}{2\eps n_m}\|\theta-\theta'\|.
\]
\end{lemma}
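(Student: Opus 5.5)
The plan is to reduce everything to elementary one-dimensional facts about $h_\eps$ and then lift to the empirical loss via the chain rule. Since $\rho_{\tau,\eps}(u)-\rho_\tau(u)=\tfrac12\{h_\eps(u)-|u|\}$, the linear term $(2\tau-1)u$ cancels. The first claim therefore reduces to showing $0\le h_\eps(u)-|u|\le\eps/2$.

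For $|u|\ge\eps$ the difference is zero. For $|u|<\eps$ I will write, with $t=|u|\in[0,\eps)$,
\[
h_\eps(u)-|u|=\frac{t^2}{2\eps}-t+\frac\eps2=\frac{(t-\eps)^2}{2\eps}.
\]
This is nonnegative and decreasing on $[0,\eps)$, so its maximum is attained at $t=0$ with value $\eps/2$. Halving gives $0\le\rho_{\tau,\eps}-\rho_\tau\le\eps/4$. The same computation shows that $h_\eps$ is continuous at $|u|=\eps$.

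For the Lipschitz claim I will use \eqref{eq:psi}. The function $\psi_{\tau,\eps}$ is continuous: at $u=\pm\eps$ the inner branch gives $\tfrac12(\pm1+2\tau-1)$, which matches the outer branch. It is piecewise affine, with slope $1/(2\eps)$ on $(-\eps,\eps)$ and slope $0$ elsewhere. A continuous piecewise-affine function is Lipschitz with constant equal to its maximal absolute slope. I will justify this by splitting $[u,u']$ at the breakpoints $\pm\eps$ and summing the increments, which gives $|\psi_{\tau,\eps}(u)-\psi_{\tau,\eps}(u')|\le|u-u'|/(2\eps)$.

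For the gradient bound I will start from \eqref{eq:fullgrad} and write $\nabla f_{m,\eps}(\theta)=-n_m^{-1}X_m^{\mathsf T}\Psi(y_m-X_m\theta)$, where $\Psi$ applies $\psi_{\tau,\eps}$ componentwise. By the scalar Lipschitz bound applied in each coordinate,
\[
\|\Psi(r)-\Psi(r')\|\le\frac{\|r-r'\|}{2\eps}.
\]
Hence
\[
\|\nabla f_{m,\eps}(\theta)-\nabla f_{m,\eps}(\theta')\|\le\frac{\|X_m\|_2}{n_m}\cdot\frac{\|X_m(\theta-\theta')\|}{2\eps}\le\frac{\|X_m\|_2^2}{2\eps n_m}\|\theta-\theta'\|.
\]
The main point to handle carefully is the continuity of $\psi_{\tau,\eps}$ at the breakpoints, since the piecewise Lipschitz argument depends on it. Beyond that, every step is routine.
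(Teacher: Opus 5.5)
Your proposal is correct and follows essentially the same route as the paper: the identity $h_\eps(u)-|u|=(\eps-|u|)^2/(2\eps)$ on $(-\eps,\eps)$ with maximum $\eps/2$ then halved, the continuous piecewise-affine derivative with maximal slope $1/(2\eps)$, and composition with the residual map. Your matrix form $\nabla f_{m,\eps}(\theta)=-n_m^{-1}X_m^{\mathsf T}\Psi(y_m-X_m\theta)$ spells out the step the paper summarizes as ``composition with the residual map and summation,'' and it gives the spectral-norm constant directly.
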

\begin{proof}
For $|u|<\eps$,
\[
h_\eps(u)-|u|=\frac{(\eps-|u|)^2}{2\eps},
\]
and the difference is zero outside the smoothing interval. Its maximum is $\eps/2$, and the factor $1/2$ in the smoothed check loss gives $\eps/4$. The derivative of $h_\eps$ is $u/\eps$ inside and $\sign(u)$ outside; it is therefore $1/\eps$-Lipschitz. Composition with the residual map and summation yield the matrix bound.
\end{proof}

Set
\begin{align*}
\delta^k=\widetilde\nabla f_{1,\eps}(\beta^k)-\nabla f_{1,\eps}(\beta^k),\quad r^{k+1}=A_1\beta^{k+1}+A_2z^{k+1},
\end{align*}
and let $\lambda^k=\varrho u^k$ be the unscaled multiplier. For compactness, define
\begin{align*}
\Delta_\beta^k&=\|\beta^k-\beta^\star\|^2-\|\beta^{k+1}-\beta^\star\|^2,\\
\Delta_\lambda^k(\widetilde\lambda)&=\|\widetilde\lambda-\lambda^k\|^2-\|\widetilde\lambda-\lambda^{k+1}\|^2,\\
\Delta_z^k&=\|A_2z^k-A_2z^\star\|^2-\|A_2z^{k+1}-A_2z^\star\|^2.
\end{align*}

\subsection{One-step inequality}
\begin{lemma}\label{lem:one-step}
Suppose $f_{1,\eps}$ is convex and $L_\eps$-smooth, $P$ is proper closed convex, and $0<\eta_{k+1}\le1/(2L_\eps)$. For every test multiplier $\widetilde\lambda$,
\begin{align}
f_{1,\eps}(\beta^{k+1})+P(z^{k+1})-F_\eps^\star
+\langle\widetilde\lambda,r^{k+1}\rangle \le & \eta_{k+1}\|\delta^k\|^2
+\frac{\Delta_\beta^k}{2\eta_{k+1}}
+\langle-\delta^k,\beta^k-\beta^\star\rangle \\
& +\frac{\Delta_\lambda^k(\widetilde\lambda)}{2\varrho}
+\frac\varrho2\Delta_z^k.
\label{eq:onestep}
\end{align}
\end{lemma}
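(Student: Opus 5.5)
The plan is the standard linearized stochastic ADMM argument in the style of \citet{ouyang2013}. I first write the optimality conditions of both subproblems in stacked form and then combine them through three-point identities. Put $\lambda^k=\varrho u^k$ and $\hat\lambda^{k+1}=\lambda^k+\varrho(A_1\beta^{k+1}+A_2z^k)$.

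\emph{Optimality conditions.} Stacking \eqref{eq:localopt} over $m$, the first-order condition of the $\beta$-step is
\[
\widetilde\nabla f_{1,\eps}(\beta^k)+A_1^{\mathsf T}\hat\lambda^{k+1}+\eta_{k+1}^{-1}(\beta^{k+1}-\beta^k)=0 .
\]
The coordinator step \eqref{eq:coordinator} minimizes $P(z)+\frac\varrho2\sum_m\|\theta_m^{k+1}+u_m^k-z\|^2$, so together with \eqref{eq:dual} it gives
\[
0\in\partial P(z^{k+1})+A_2^{\mathsf T}\lambda^{k+1},\qquad
\lambda^{k+1}=\lambda^k+\varrho r^{k+1}.
\]
It also gives $\hat\lambda^{k+1}=\lambda^{k+1}+\varrho A_2(z^k-z^{k+1})$.

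\emph{Smooth block.} I apply the descent lemma at $\beta^k$ and then convexity of $f_{1,\eps}$ at $\beta^k$ against $\beta^\star$. This gives
\[
f_{1,\eps}(\beta^{k+1})-f_{1,\eps}(\beta^\star)\le\langle\nabla f_{1,\eps}(\beta^k),\beta^{k+1}-\beta^\star\rangle+\tfrac{L_\eps}2\|\beta^{k+1}-\beta^k\|^2 .
\]
Next I replace $\nabla f_{1,\eps}(\beta^k)$ by $\widetilde\nabla f_{1,\eps}(\beta^k)-\delta^k$ and split the $\delta^k$ term as
\[
\langle-\delta^k,\beta^{k+1}-\beta^k\rangle+\langle-\delta^k,\beta^k-\beta^\star\rangle .
\]
The first piece is bounded by Young's inequality by $\eta_{k+1}\|\delta^k\|^2+\frac1{4\eta_{k+1}}\|\beta^{k+1}-\beta^k\|^2$. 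Because $\eta_{k+1}\le1/(2L_\eps)$, we have $L_\eps/2\le1/(4\eta_{k+1})$. Hence all $\|\beta^{k+1}-\beta^k\|^2$ terms total at most $\frac1{2\eta_{k+1}}\|\beta^{k+1}-\beta^k\|^2$.

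I then substitute the $\beta$-optimality condition into $\langle\widetilde\nabla f_{1,\eps}(\beta^k),\beta^{k+1}-\beta^\star\rangle$. The proximal part, handled by the identity $-2\langle a-b,a-c\rangle=\|b-c\|^2-\|a-c\|^2-\|a-b\|^2$, yields $\Delta_\beta^k/(2\eta_{k+1})-\frac1{2\eta_{k+1}}\|\beta^{k+1}-\beta^k\|^2$, which cancels the leftover quadratic. What remains is the multiplier term $-\langle\hat\lambda^{k+1},A_1(\beta^{k+1}-\beta^\star)\rangle$.

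\emph{Nonsmooth block and multipliers.} Convexity of $P$ with the $z$-condition gives
\[
P(z^{k+1})-P(z^\star)\le-\langle\lambda^{k+1},A_2(z^{k+1}-z^\star)\rangle .
\]
Adding this to the multiplier term, using $A_1\beta^\star+A_2z^\star=0$ and the expression for $\hat\lambda^{k+1}$, the sum becomes
\[
-\langle\lambda^{k+1},r^{k+1}\rangle-\varrho\langle A_2(z^k-z^{k+1}),A_1(\beta^{k+1}-\beta^\star)\rangle .
\]
I then add $\langle\widetilde\lambda,r^{k+1}\rangle$ to both sides. Writing $r^{k+1}=(\lambda^{k+1}-\lambda^k)/\varrho$, the three-point identity gives
\[
\langle\widetilde\lambda-\lambda^{k+1},r^{k+1}\rangle=\frac{\Delta_\lambda^k(\widetilde\lambda)}{2\varrho}-\frac1{2\varrho}\|\lambda^{k+1}-\lambda^k\|^2 .
\]
For the cross term I write $A_1(\beta^{k+1}-\beta^\star)=r^{k+1}-A_2(z^{k+1}-z^\star)$, which splits it into two pieces.
\begin{itemize}
\item The piece $\varrho\langle A_2(z^k-z^{k+1}),A_2(z^{k+1}-z^\star)\rangle$ equals $\frac\varrho2\Delta_z^k-\frac\varrho2\|A_2(z^k-z^{k+1})\|^2$ by the three-point identity.
\item The piece $-\varrho\langle A_2(z^k-z^{k+1}),r^{k+1}\rangle$ is at most $\frac\varrho2\|A_2(z^k-z^{k+1})\|^2+\frac\varrho2\|r^{k+1}\|^2$ by Young's inequality.
\end{itemize}
Since $\frac\varrho2\|r^{k+1}\|^2=\frac1{2\varrho}\|\lambda^{k+1}-\lambda^k\|^2$, both negative squares are absorbed exactly. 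Collecting terms gives \eqref{eq:onestep}, with $F_\eps^\star=f_{1,\eps}(\beta^\star)+P(z^\star)$.

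\emph{Main obstacle.} The delicate step is this final bookkeeping of the multiplier and $z$ cross terms. The rewriting of $\hat\lambda^{k+1}$ through $\lambda^{k+1}$ and the Young split must be chosen so that the $\|\lambda^{k+1}-\lambda^k\|^2$ and $\|A_2(z^k-z^{k+1})\|^2$ terms cancel exactly, leaving no residual term in \eqref{eq:onestep}. A secondary point is to keep the noise term as $\langle-\delta^k,\beta^k-\beta^\star\rangle$, which is $\mathcal F_k$-measurable in its second argument. This is what later lets conditional unbiasedness \eqref{eq:app_oracle} eliminate it in expectation.
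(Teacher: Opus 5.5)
Your proof is correct and follows the same route as the paper's argument: the stacked first- and second-block optimality conditions, the descent-lemma/convexity bound with the $\delta^k$ split and Young's inequality absorbed by $\eta_{k+1}\le1/(2L_\eps)$, and three-point identities for the $\beta$, multiplier, and $A_2z$ terms. You also supply the multiplier and cross-term bookkeeping that the paper only sketches, and your Young bound on $-\varrho\langle A_2(z^k-z^{k+1}),r^{k+1}\rangle$ cancels both negative squares exactly; an exact polarization would give the same result plus a nonpositive term $-\frac\varrho2\|A_1\beta^{k+1}+A_2z^k\|^2$ that can be dropped.
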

\begin{proof}
Let $d^k=\beta^{k+1}-\beta^k$. Smoothness at $\beta^k$ and convexity at $\beta^\star$ give
\begin{align*}
f_{1,\eps}(\beta^{k+1})-f_{1,\eps}(\beta^\star)
\le\langle\widetilde\nabla f_{1,\eps}(\beta^k),
\beta^{k+1}-\beta^\star\rangle -\langle\delta^k,\beta^k-\beta^\star\rangle-\langle\delta^k,d^k\rangle+\frac{L_\eps}{2}\|d^k\|^2.
\end{align*}
The first-block optimality condition is
\begin{align*}
0={}\widetilde\nabla f_{1,\eps}(\beta^k)+A_1^{\mathsf{T}}\lambda^k
+\varrho A_1^{\mathsf{T}}(A_1\beta^{k+1}+A_2z^k)
+\eta_{k+1}^{-1}d^k.
\end{align*}
The second-block condition and multiplier update give
\begin{align*}
0\in\partial P(z^{k+1})+A_2^{\mathsf{T}}\lambda^{k+1}, \quad \lambda^{k+1}-\lambda^k=\varrho r^{k+1}.
\end{align*}
Pair the first condition with $\beta^{k+1}-\beta^\star$, use the convex subgradient inequality for $P$, add $\langle\widetilde\lambda,r^{k+1}\rangle$, and apply
$2\langle a-b,a-c\rangle=\|a-b\|^2+\|a-c\|^2-\|b-c\|^2$.
Finally,
\[
-\langle\delta^k,d^k\rangle\le\eta_{k+1}\|\delta^k\|^2+\frac{1}{4\eta_{k+1}}\|d^k\|^2,
\]
and the remaining coefficient of $\|d^k\|^2$ is nonpositive because $\eta_{k+1}\le1/(2L_\eps)$.
\end{proof}

\subsection{Proof of Theorem~\ref{thm:convex-rate}}
Let $S_K=\sum_{k=0}^{K-1}\eta_{k+1}$. Multiply \eqref{eq:onestep} by $\eta_{k+1}$ and sum. The first-block distance telescopes exactly. For a nonnegative sequence $a_k$ and nonincreasing $\eta_k$,
\[
\sum_{k=0}^{K-1}\eta_{k+1}(a_k-a_{k+1})\le\eta_1a_0,
\]
which bounds the weighted multiplier and $A_2z$ differences. Convexity moves weighted function values to $(\bar\beta^K,\bar z^K)$.

The pathwise inequality holds for every test multiplier. After a path is realized, choose
\begin{align*}
\widetilde\lambda=
\begin{cases}
q\bar r^K/\|\bar r^K\|,&\bar r^K\ne0,\\
0,&\bar r^K=0,
\end{cases} \quad
\bar r^K=A_1\bar\beta^K+A_2\bar z^K.
\end{align*}
This choice is legitimate because the only martingale term does not contain $\widetilde\lambda$. Taking expectations and using conditional unbiasedness yields
\small{\begin{align}
\mathbb{E}\left[f_{1,\eps}(\bar\beta^K)+P(\bar z^K)-F_\eps^\star
+q\|\bar r^K\|\right] \le\frac1{S_K}\left[
\sigma_b^2\sum_{k=0}^{K-1}\eta_{k+1}^2+\frac{D_\beta^2}{2}
+\frac{\eta_1q^2}{2\varrho}
+\frac{\varrho\eta_1D_z^2}{2}\right].
\label{eq:masterbound}
\end{align}}
For $\sigma_b>0$, the step \eqref{eq:stepsize} satisfies
$\sum_{k<K}\eta_{k+1}^2=O(\log K)$ and $S_K=\Omega(\sqrt K)$ by integral comparison. This proves \eqref{eq:mainrate}. The martingale cancellation is $\mathbb{E}\langle\delta^k,\beta^k-\beta^\star\rangle
=\mathbb{E}\langle\mathbb{E}_k\delta^k,\beta^k-\beta^\star\rangle=0$.
If $\sigma_b=0$, the variance term vanishes and the same formula uses a constant step, so $S_K=\Theta(K)$ and \eqref{eq:masterbound} gives deterministic $O(1/K)$. If $q>\|\lambda^\star\|$, the saddle-point inequality yields feasibility at the same stochastic order.

\subsection{Shared model and original check loss}
Because the derivative of the smoothed check loss lies in $[\tau-1,\tau]$,
\[
f_{1,\eps}(\mathbf1_M\otimes z)
\le f_{1,\eps}(\beta)+G_c\|\beta-\mathbf1_M\otimes z\|.
\]
Hence the feasibility term in Theorem~\ref{thm:convex-rate} absorbs the loss change incurred by replacing local averages with the shared model. For any $z$, Lemma~\ref{lem:app-smoothing} and $\sum_m\alpha_m=1$ give
\[
F(z)\le F_\eps(z)\le F(z)+\eps/4,
\qquad F_\eps^\star\le F^\star+\eps/4.
\]
Therefore
\[
F(\bar z^K)-F^\star
\le F_\eps(\bar z^K)-F_\eps^\star+\eps/4,
\]
which proves Proposition~\ref{prop:shared-model}. The approximation term is $\eps/4$, not $N\eps/4$, because each local empirical loss is averaged and the site weights sum to one.

\subsection{Proof of Proposition~\ref{prop:strong-convex}}
Suppose $f_{1,\eps}$ is $\mu_c$-strongly convex. In Lemma~\ref{lem:one-step}, strong convexity contributes the additional term
$-\mu_c\|\beta^k-\beta^\star\|^2/2$ to the right-hand side. Set
$a_k=\|\beta^k-\beta^\star\|^2$ and choose
$\eta_{k+1}=\{2L_\eps+(k+1)\mu_c\}^{-1}$. Then
\begin{align*}
\frac{a_k-a_{k+1}}{2\eta_{k+1}}-\frac{\mu_c}{2}a_k
&=\left(L_\eps+\frac{k\mu_c}{2}\right)a_k
 -\left(L_\eps+\frac{(k+1)\mu_c}{2}\right)a_{k+1}.
\end{align*}
Hence these first-block distance terms telescope in the unweighted sum. The multiplier and $A_2z$ distance differences also telescope without weighting. Taking expectations removes the martingale term and Assumption~\ref{ass:oracle} gives
\[
\sum_{k=0}^{K-1}\mathbb E\!\left[
 f_{1,\eps}(\beta^{k+1})+P(z^{k+1})-F_\eps^\star
 +\langle\widetilde\lambda,r^{k+1}\rangle\right]
\le C_{\rm sc}+\sigma_b^2\sum_{k=0}^{K-1}\eta_{k+1},
\]
where $C_{\rm sc}<\infty$ depends only on the initial primal-dual distances, $L_\eps$, $\varrho$, and the norm of the test multiplier, but not on $K$. Convexity transfers the left-hand side to the unweighted averages $(\widehat\beta^K,\widehat z^K)$. As in the proof of Theorem~\ref{thm:convex-rate}, after a sample path is realized choose $\widetilde\lambda=q\widehat r^K/\|\widehat r^K\|$ when $\widehat r^K\ne0$ and zero otherwise, where $\widehat r^K=A_1\widehat\beta^K+A_2\widehat z^K$. Since
$\sum_{k=0}^{K-1}\eta_{k+1}=O(\log K)$, division by $K$ yields the stated $O(\log K/K)$ joint objective-feasibility bound. If $q>\|\lambda^\star\|$, the saddle-point inequality again gives feasibility at the same order. This proves Proposition~\ref{prop:strong-convex}.

\section{Conditional weakly convex extension}\label{app:weakly-convex}
\subsection{Coordinator subproblem}
For $t\ge0$, with $a>1$ for MCP and $a>2$ for SCAD, the penalties are
\[
p_{\lambda,a}^{\rm MCP}(t)=
\begin{cases}
\lambda t-t^2/(2a),&0\le t\le a\lambda,\\
a\lambda^2/2,&t>a\lambda,
\end{cases}
\]
and
\[
p_{\lambda,a}^{\rm SCAD}(t)=
\begin{cases}
\lambda t,&0\le t\le\lambda,\\
\dfrac{-t^2+2a\lambda t-\lambda^2}{2(a-1)},&\lambda<t\le a\lambda,\\[1mm]
\dfrac{(a+1)\lambda^2}{2},&t>a\lambda.
\end{cases}
\]
MCP is $1/a$-weakly convex and SCAD is $1/(a-1)$-weakly convex. If $P_{\rm nc}$ is $\kappa$-weakly convex, then
\begin{align*}
P_{\rm nc}(z)+\frac{M\varrho}{2}\|z-v\|^2
={}\left(P_{\rm nc}(z)+\frac\kappa2\|z\|^2\right)
+\frac{M\varrho-\kappa}{2}\|z\|^2
-M\varrho\langle z,v\rangle+\text{constant}.
\end{align*}
Thus $M\varrho>\kappa$ makes the coordinator objective strongly convex and proves Proposition~\ref{prop:coordinator-update}. It proves well-posedness only; it does not restore convexity of the global objective.

\subsection{Proof of the conditional stationarity result}
Let $w^k=(\beta^k,z^k,\lambda^k)$ and let $e^k$ be the gradient error. Summing \eqref{eq:ncdescent} under $\sum_k\|e^k\|^2<\infty$ gives
\begin{align*}
c_1\sum_{k=0}^\infty\|w^{k+1}-w^k\|^2
\le\Phi(w^0)-\inf\Phi+c_2\sum_{k=0}^\infty\|e^k\|^2<\infty.
\end{align*}
Consequently $w^{k+1}-w^k\to0$ and $e^k\to0$. The dual update gives
\[
A_1\beta^{k+1}+A_2z^{k+1}=\varrho^{-1}(\lambda^{k+1}-\lambda^k)\to0.
\]
Equation~\eqref{eq:ncrelative} yields $\dist(0,\partial\Phi(w^{k+1}))\to0$. Boundedness provides convergent subsequences, and closedness of the limiting-subdifferential graph transfers criticality to every cluster point. The assumed identification of critical points of $\Phi$ with the KKT set gives the three KKT conditions in Theorem~\ref{thm:conditional-stationarity}.

If $\Phi$ has the Kurdyka--{\L}ojasiewicz property and $\sum_k\|e^k\|<\infty$, the standard finite-length argument yields $\sum_k\|w^{k+1}-w^k\|<\infty$ and convergence of the whole sequence. The boundedness assumption is essential: MCP and SCAD become flat for large coefficients, so boundedness may require a bounded-level-set condition, compact constraint, full column rank, or small ridge stabilization.

\subsection{Sufficient error regimes and limitations}
\begin{lemma}[Error regimes]\label{lem:error-regimes}
Let $e^k$ be the difference between the stochastic estimator and the full stacked gradient.
\begin{enumerate}
\item If all workers use full local gradients from some finite iteration onward, then $e^k=0$ eventually and both $\sum_k\|e^k\|^2$ and $\sum_k\|e^k\|$ are finite.
\item If deterministic errors satisfy $\|e^k\|\le c(k+1)^{-1-\delta}$ for some $\delta>0$, then the absolute-summability condition holds.
\item If $\mathbb{E}_k\|e^k\|^2\le C/b_k$ and $\sum_k b_k^{-1}<\infty$, then $\sum_k\|e^k\|^2<\infty$ almost surely. This does not imply $\sum_k\|e^k\|<\infty$.
\item A variance-reduced method may satisfy
\begin{align*}
\mathbb{E}_k\|e^{k+1}\|^2
\le a\|e^k\|^2+c\|\beta^{k+1}-\beta^k\|^2,
0<a<1.
\end{align*}
Then an estimator-memory term can be added to the Lyapunov function, but the constants and update rules must be rederived for the chosen estimator.
\end{enumerate}
\end{lemma}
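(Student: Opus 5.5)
The statement to be proved is Lemma~\ref{lem:error-regimes}, and I would prove its four items separately.

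\emph{Item 1.} Suppose every worker uses $\mathcal B_m^k=\{1,\ldots,n_m\}$ for all $k\ge k_0$. Then each $g_m^k$ equals $\alpha_m\nabla f_{m,\eps}(\theta_m^k)$ exactly by \eqref{eq:stochgrad} and \eqref{eq:fullgrad}, so $e^k=0$ for $k\ge k_0$. Both series then reduce to finite sums of finite terms. Finiteness of each term follows because $\|g_m^k\|\le\alpha_m\max(\tau,1-\tau)\max_i\|x_{mi}\|$ by the bound on $\psi_{\tau,\eps}$ in Lemma~\ref{lem:smoothing-properties}.

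\emph{Item 2.} Compare with the $p$-series: $\sum_k c(k+1)^{-1-\delta}=c\,\zeta(1+\delta)<\infty$. Squaring the bound gives $\|e^k\|^2\le c^2(k+1)^{-2-2\delta}$, which is also summable.

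\emph{Item 3.} This is the only step that needs a probabilistic argument, though it is still short. By the tower property,
\[
\mathbb E\sum_k\|e^k\|^2=\sum_k\mathbb E\bigl[\mathbb E_k\|e^k\|^2\bigr]\le C\sum_k b_k^{-1}<\infty,
\]
where the interchange of sum and expectation is justified by monotone convergence since all terms are nonnegative. A nonnegative random variable with finite expectation is finite almost surely, so $\sum_k\|e^k\|^2<\infty$ a.s. This requires $b_k$ to be deterministic, or at least $\mathcal F_k$-measurable, which I would state explicitly.

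For the caveat that absolute summability need not follow, I would give a deterministic counterexample to the implication: take $\|e^k\|$ of order $b_k^{-1/2}$ with $b_k=(k+1)^{3/2}$. Then $\sum_k b_k^{-1}<\infty$, but $\sum_k (k+1)^{-3/4}=\infty$. One could also note that Cauchy--Schwarz gives $\mathbb E\|e^k\|\le\sqrt{C/b_k}$, and this only yields absolute summability under the stronger condition $\sum_k b_k^{-1/2}<\infty$. The intended claim is only that the implication fails in general, so this example suffices.

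\emph{Item 4.} This item is a structural remark rather than a quantitative claim, and I expect it to be the main obstacle only in deciding how much to prove. I would verify the mechanism: form $\Psi^k=\Phi(w^k)+\gamma\|e^k\|^2$ with $\gamma>0$ chosen so that
\[
c_2\le\gamma(1-a),\qquad \gamma c<c_1 .
\]
This choice is always possible because one can take $\gamma$ close to $c_2/(1-a)$ and, if necessary, enlarge $c_1$ by adjusting the step sizes. Taking conditional expectations in \eqref{eq:ncdescent} and adding $\gamma$ times the recursion then gives
\[
\mathbb E_k\Psi^{k+1}\le\Psi^k-(c_1-\gamma c)\|w^{k+1}-w^k\|^2-(\gamma(1-a)-c_2)\|e^k\|^2 ,
\]
using that $\|\beta^{k+1}-\beta^k\|\le\|w^{k+1}-w^k\|$. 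The Robbins--Siegmund lemma then gives summability of both quantities almost surely. The caveat in the statement, that constants and update rules must be rederived, is justified by observing that $a$, $c$, and the compatibility of the step sizes with $c_1$ depend on the chosen estimator.
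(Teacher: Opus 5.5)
Items 1--3 follow the paper's proof: finite support of the error sequence, comparison with a convergent $p$-series, and the Tonelli bound $\mathbb E\sum_k\|e^k\|^2\le C\sum_k b_k^{-1}$, followed by ``finite expectation implies finite almost surely.'' You add two things the paper does not give. First, your example $\|e^k\|^2=b_k^{-1}=(k+1)^{-3/2}$ justifies the claim that absolute summability can fail, which the paper only asserts. Second, for item 4 you write down $\Psi^k=\Phi(w^k)+\gamma\|e^k\|^2$ and invoke Robbins--Siegmund, where the paper only says the memory term is absorbed ``under suitable parameters.''

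\textbf{Item 4: the choice of $\gamma$.} Your claim that a suitable $\gamma$ is ``always'' available is wrong. You need $c_2/(1-a)<\gamma<c_1/c$. The left inequality must be strict if Robbins--Siegmund is to give $\sum_k\|e^k\|^2<\infty$ directly; with equality you must recover it separately from the recursion. Such a $\gamma$ exists only if $c\,c_2<c_1(1-a)$. Appealing to smaller step sizes to enlarge $c_1$ does not work inside the lemma, where $a,c,c_1,c_2$ are given. In concrete schemes the step sizes also move $c$ and $c_2$. This compatibility condition is exactly the paper's ``suitable parameters'' and ``constants must be rederived'' caveat, so state it as a hypothesis rather than a consequence.

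\textbf{Item 4: the conditioning.} For Robbins--Siegmund, condition on the history that already contains $e^k$ and $w^{k+1}$, which is the paper's $\mathcal F_{k+1}$. Conditioning on $\mathcal F_k$ does not work, because $\Psi^k$ and $\|w^{k+1}-w^k\|^2$ are not $\mathcal F_k$-measurable.

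\textbf{Item 3: random batch sizes.} If $b_k$ is random but $\mathcal F_k$-measurable, your chain only gives $C\sum_k\mathbb E[b_k^{-1}]$. That sum need not be finite when $\sum_k b_k^{-1}<\infty$ holds only pathwise. That case needs a stopping-time localization, not the plain expectation bound.
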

\begin{proof}
The first two statements follow from finite support and comparison with a convergent $p$-series. For the third, Tonelli's theorem gives
\[
\mathbb{E}\sum_{k=0}^\infty\|e^k\|^2
=\sum_{k=0}^\infty\mathbb{E}\|e^k\|^2
\le C\sum_{k=0}^\infty b_k^{-1}<\infty.
\]
A nonnegative random variable with finite expectation is finite almost surely. The fourth statement is the standard contraction mechanism behind memory-based variance reduction: a weighted estimator-error term can be added to the descent function and absorbed under suitable parameters.
\end{proof}
The increasing-batch experiment should therefore be interpreted cautiously. Because the batch size is capped by $n_m$, the estimator eventually becomes exact if the run is sufficiently long. Before that point, the experiment illustrates a decreasing stationarity residual but does not independently verify the almost-sure assumptions. A fixed batch is expected to stabilize in a noise-determined neighborhood.

\end{document}